\newcommand{\mnote}[1]{ \marginpar{\tiny\bf
            \begin{minipage}[t]{0.5in}
              \raggedright #1
           \end{minipage}}}
\begin{document}

\title{Lower Bounds for Oblivious Near-Neighbor Search}
\author{Kasper Green Larsen\thanks{{\tt larsen@cs.au.dk}. Aarhus
    University. Work supported by a Villum Young Investigator Grant
    and an AUFF Starting Grant.}
  \and Tal Malkin\thanks{{\tt tal@cs.columbia.edu}. Columbia University.   Work supported in part by the Leona M.~\& Harry B.~Helmsley Charitable Trust.}
  \and Omri Weinstein\thanks{{\tt omri@cs.columbia.edu}. Columbia University. 
Work supported by NSF CAREER Award CCF-1844887.} \and Kevin Yeo\thanks{{\tt kwlyeo@google.com}. Google LLC.}}
\date{}

\newcommand{\kevin}[1]{{\color{green}Kevin: #1}}

\newcommand{\ignore}[1]{}
\DeclarePairedDelimiter{\ceil}{\lceil}{\rceil}
\DeclarePairedDelimiter{\floor}{\lfloor}{\rfloor}

\newtheorem{theorem}{Theorem}
\numberwithin{theorem}{section}
\newtheorem{definition}[theorem]{Definition}
\newtheorem{construct}[theorem]{Construction}
\newtheorem{lemma}[theorem]{Lemma}

\newcommand{\poly}{\text{poly}}
\newcommand{\pr}{\mathsf{Pr}}
\newcommand{\E}{\mathsf{E}}
\newcommand{\negl}{\mathsf{negl}}
\newcommand{\ds}{\mathcal{D}} 
\newcommand{\bP}{\mathbf{X}}
\newcommand{\bU}{\mathbf{U}}
\newcommand{\bT}{\mathbf{T}}
\newcommand{\bq}{\mathbf{q}}
\newcommand{\adv}{\mathcal{A}}
\newcommand{\ann}{\mathsf{ANN}}
\newcommand{\snap}{\mathsf{snapshot}}
\newcommand{\enc}{\mathsf{enc}}
\newcommand{\Enc}{\mathsf{Enc}}

\newcommand{\calB}{\mathcal{B}}
\newcommand{\calP}{\mathcal{P}}
\newcommand{\idx}{\mathsf{idx}}
\newcommand{\supp}{\mathsf{support}}

\newcommand{\op}{\mathsf{op}}
\newcommand{\qop}{\mathsf{query}}
\newcommand{\uop}{\mathsf{insert}}
\newcommand{\init}{\mathsf{preprocess}}

\newcommand{\view}{\mathbb{V}}

\renewcommand{\log}{\lg}
\newcommand{\ANN}{\mathsf{ANN}}
\newcommand{\ONN}{\mathsf{ObvANN}}
\newcommand{\eps}{\epsilon}
\newcommand{\cA}{\mathcal{A}}
\newcommand{\cB}{\mathcal{B}}
\newcommand{\cC}{\mathcal{C}}
\newcommand{\cE}{\mathcal{E}}
\newcommand{\cI}{\mathcal{I}}
\newcommand{\cP}{\mathcal{P}} 
\newcommand{\cG}{\mathcal{G}}
\newcommand{\cM}{\mathcal{M}}
\newcommand{\cN}{\mathcal{N}}
\newcommand{\cD}{\mathcal{D}}
\newcommand{\cX}{\mathcal{X}}
\newcommand{\cQ}{\mathcal{Q}}
\newcommand{\cU}{\mathcal{U}}
\newcommand{\cT}{\mathcal{T}}
\newcommand{\cR}{\mathbf{R}}
\newcommand{\cF}{\mathcal{F}}
\newcommand{\cV}{\mathcal{V}}
\newcommand{\cZ}{\mathcal{Z}}
\newcommand{\cK}{\mathcal{K}}

\newcommand{\st}{\mathsf{st}}
\newcommand{\dy}{\mathsf{dy}}

\newcommand{\addr}{\mathsf{addr}}

\newcommand{\dist}{\Delta}

\maketitle

\begin{abstract}
We prove an  $\Omega(d \lg n/ (\lg\lg n)^2)$ lower bound on the dynamic cell-probe complexity of statistically \emph{oblivious} approximate-near-neighbor search ($\mathsf{ANN}$) over the $d$-dimensional Hamming cube. For the natural setting of $d = \Theta(\log n)$, our result implies an $\tilde{\Omega}(\lg^2 n)$ lower bound, which is a quadratic improvement over the highest (non-oblivious) cell-probe lower bound for $\mathsf{ANN}$. This is the first super-logarithmic \emph{unconditional} lower bound for $\mathsf{ANN}$ against general (non black-box) data structures. We also show that any oblivious \emph{static} data structure for decomposable search problems (like $\mathsf{ANN}$) can be obliviously dynamized with $O(\log n)$ overhead in update and query time, strengthening a classic result of Bentley and Saxe (Algorithmica, 1980).
\end{abstract}

\thispagestyle{empty}
\newpage
\setcounter{page}{1}

\section{Introduction}
\label{sec:intro}

The {\em nearest-neighbor search} problem asks to preprocess 
a dataset $P$ of $n$ input points in some $d$-dimensional metric space, say   
$\Re^d$, so that for any query point $q$ in the space, the data structure can quickly retrieve 
the closest point in $P$ to $q$ (with respect to the underlying distance metric). 
The {\em $r$-near-neighbor} problem is a relaxation of the nearest-neighbor problem, 
which requires, more modestly, 
to return \emph{any} point in the dataset within distance $r$ of the query point $q$ (if any exists). 
The distance parameter $r$ is typically referred to as the {\em radius}.
Efficient algorithms (either offline or online) 
for both the nearest-neighbor and $r$-near-neighbor problems 
are only known for low-dimensional spaces~\cite{Cla88,Mei93}, 
as the only known general solutions for these problems are the naive
ones: either a brute-force search requiring $O(dn)$ time (say, on a
word-RAM), or precomputing the answers which requires prohibitive
space exponential in $d$. 
This phenomenon is commonly referred to as the 
``curse of dimensionality'' in high-dimensional optimization. 
This obstacle is quite 
problematic as nearest-neighbor search primitives are the backbone
of a wide variety of industrial applications as well as algorithm design,
ranging from machine learning~\cite{SDI06} and computer vision~\cite{HS12},
to computational geometry~\cite{CDO02}, spatial databases~\cite{Tya18} and signal
processing~\cite{MPL00} as
some examples.

To circumvent the ``curse of dimensionality'', a further relaxation of the
near(est)-neighbor problem was introduced, resorting to \emph{approximate} solutions,
which is the focal point of this paper. 
In the {\em $(c,r)$-approximate-near-neighbor} problem ($\ANN$),  
the data structure needs to return any point in $P$ that is distance at most $cr$ from the query point $q$, 
assuming that there exists at least one data point in $P$
that is within distance at most $r$ from the query. If all points in the data set $P$
are distance greater than $cr$ from the query point $q$, no point will be reported.
In other words, $(c,r)$-$\ANN$ essentially asks 
to distinguish the two extreme cases where there exists a point in $P$ which is at most $r$-close to the query point $q$, 
or \emph{all} points in $P$ are at $\geq cr$-far from $q$. 
Perhaps surprisingly, the geometric ``gap" in this promise version of the problem turns out to be crucial,   
and indeed evades the ``curse of dimensionality''. A long and influential line of work in geometric 
algorithms based on \emph{locality sensitive hashing} (or, LSH, for short) techniques \cite{IM98, Pan06}
show that the search time for this promise problem (under various $\ell_p$ norms) can be 
dramatically reduced from $\sim n$ to $n^{\delta}$ (for a small constant $\delta$ depending on $r$ and $c$) 
at the cost of a mild space overhead of $n^{1+\eps}$ or even $n \poly\lg n$
in the \emph{static} setting. Interestingly, these upper bounds extend to the more challenging and realistic
\emph{dynamic} setting where points in the dataset arrive online, yielding a dynamic data structure  
with $\poly\lg n$ update time and $n^{\delta}$ query time~\cite{Pan06}.  
For a more detailed exposition of the state-of-the-art on $\ANN$, 
we refer the reader to the following surveys~\cite{AI06, And09}.

On the lower bound side, progress has been much slower. While there has been a considerable 
amount of work on the limits of $\ANN$ in black-box models
of computation
with ``no-coding'' assumptions (e.g., \cite{BV02,KL05}),
the highest \emph{unconditional} lower bound to date is the
$\Omega(d/\log(sw/nd))$ query time lower bound
for any static data structure
by Wang and Yin~\cite{wangyin} as well as Yin~\cite{Yin16}, 
extending previous results of
\cite{PT06,ACP08,PTW08, PTW10}, 
where $s$ denotes the data structure's storage in cells and $w$ is the
word size in bits. 
This is also the highest cell-probe lower bound to date in the \emph{dynamic} setting 
-- the aforementioned bound implies that any (randomized) dynamic data structure for $\ANN$ 
with fast ($\poly\lg n$) update time must have $\tilde{\Omega}(d)$ query time.  
This is in contrast to typical data structure problems, where online lower bounds are known to be higher 
than their static counterparts.   
While this bound is exponentially far from the aforementioned upper bounds, a recurring theme in complexity theory
is that information-theoretic lower bounds are significantly more challenging
compared to black-box bounds, and hence 
lower.
It is widely believed that the logarithmic lower bound is far from tight, especially in the fully dynamic 
setting. Indeed, Panigrahy {\em et al.}~\cite{PTW10} conjecture that the dynamic cell-probe complexity of $\ANN$ 
should be \emph{polynomial},  
but could only prove this for LSH-type data structures (a.k.a ``low contention")
where no single cell is probed too often.
There are also \emph{conditional} (``black-box") lower bounds asserting that
polynomial $\Omega(n^\eps)$ operational time is indeed necessary for the  
{\em offline} version of $\ANN$, under the Strong Exponential-Time Hypothesis
(\cite{ARW17,Wil18,Rub18}). 

\paragraph{Privacy-Preserving Near-Neighbor Search.}
Due to the increasing size of today's datasets,
an orthogonal line of research
has been studied for {\em privacy-preserving near-neighbor search}.
In this scenario, the dataset of points
have been outsourced by a client to a third-party server such
as a cloud storage provider. The client would like to be able
to perform near(est)-neighbor search queries over the outsourced set of
data points. However, the storage of
potentially sensitive data onto an untrusted third-party brings
many privacy concerns. This leads to the natural
problem of whether a client is able to
outsource a data set of points to an untrusted server while
maintaining the ability to perform private near(est)-neighbor queries
over the data set efficiently.

One aspect of privacy is protecting the content of the outsourced
data set. This problem can be addressed by encryption where the
client holds the secret key.
However, the use of encryption does not protect information
leaked by observing the patterns of access to server memory.
Towards that end, the client may wish to implement {\em oblivious access}
where the patterns of access to server memory is independent of both
the content of the data set as well as the queries performed by the client.
In order to focus on the latter problem,
we assume the server's view only contains the patterns of access to server
memory.
Informally, $\delta$-statistical obliviousness implies that
for any two operation sequences of equal length $O_1$
and $O_2$, it must be that $|\view_{\ds}(O_1) - \view_{\ds}(O_2)| \le \delta$
where $\view_{\ds}(O)$ is the distribution of access patterns to server
memory by the data structure $\ds$ executing $O$.
This can later be combined with standard
computational assumptions and cryptographic encryption or
information-theoretic encryption via one-time padding
(if the client can either hold or securely store a random pad) to
ensure privacy of the data set contents.

To address the problem of protecting access patterns, the
{\em oblivious RAM} (ORAM) primitive was introduced by Goldreich
and Ostrovsky~\cite{GO96}.
ORAM considers the scenario where the server holds an array
and the client wishes to either retrieve or update various elements
in the array while guaranteeing oblivious access.
ORAMs are very powerful as they provide a simple transformation
from any data structure into an oblivious data structure.
By executing every access to server memory of any non-oblivious data structure
using an ORAM,
the access pattern of the resulting data structure ends
up being oblivious.
Due to the importance of ORAM, there has been a long line of work
constructing ORAMs.
For example, we refer the reader to some examples:
\cite{PR10,DMN11,GM11,GMOT12,KLO12,SVS13,CLP14,GHL14,BCP16,CLT16,GLO15}.
Recently,
this wave of research
led to the $O(\log n \cdot \log\log n)$ ORAM construction by
Patel {\em et al.}~\cite{PPR18},
and, finally, an $O(\log n)$ ORAM by Asharov {\em et al.}~\cite{AKL18}.
Therefore, we can build an oblivious data structure
with an additional logarithmic overhead compared to the best non-oblivious data structure.

There has also been significant work on the lower bound of ORAMs.
Goldreich and Ostrovksy~\cite{GO96} present an $\Omega(\log n)$ for ORAMs
in the restricted setting of ``balls-and-bins'' model (i.e. a
``non-coding'' assumption) and
statistical security. Larsen and Nielsen~\cite{LN18} extended
the $\Omega(\log n)$ lower bound to the cell-probe model and
computational security matching the aforementioned upper bounds.
Additionally, works by Boyle and Naor~\cite{BN16} as well
as Weiss and Wichs~\cite{WW18} show that any non-trivial
lower bounds for either offline or online,
read-only ORAMs would imply huge breakthroughs
in lower bounds for sorting circuits and/or locally decodable codes.

Going back to the problem of privacy-preserving near-neighbor search,
many works in the past decade
\cite{KS07, MCA07, GKK08, WCK09, PBP10, YLX13, ESJ14, LSP15, WHL16}
attempt to circumvent the additional efficiency overhead incurred by ORAM.
Instead of ensuring oblivious access where the access patterns
are independent of the data set and queries, the access patterns of
many constructions from previous works end up leaking non-trivial
amounts of information.
For example, the access patterns in the constructions by
Wang {\em et al.}~\cite{WHL16} leak
the identity of the point reported by queries.
In more detail, as their work considers the $k$-nearest-neighbor problem,
their algorithms leak the identity of the $k$ encrypted points
that are closest to the query point.
Recent work by Kornaropoulos {\em et al.}~\cite{KPT18}
has shown that this non-trivial
leakage can be abused to accurately retrieve almost all private data.
As a result,
the requirement of oblivious access is integral to ensure privacy
for the near-neighbor problem.
Therefore, several works consider variants of near-neighbor search
with oblivious access such as~\cite{EFG09,SSW09,BBC10,EHK11,SFR18,AHL18,CCD19}
to name a few.


An intriguing question is whether the extra $\Theta(\log n)$
overhead for oblivious data structures over their non-oblivious counterparts
is really necessary. For the problem of RAMs,
it has been shown that the $\Theta(\log n)$ overhead is both necessary
and sufficient~\cite{LN18,PY18}.
Jacob {\em et al.}~\cite{JLN18} also show that
the $\Theta(\log n)$ overhead is necessary and sufficient for
many fundamental data structures such as stacks and queues, but quite
surprisingly, Jafargholi {\em et al.}~\cite{jafargholi} very recently showed that
(comparison-based) priority queues can be made oblivious with no overhead at all.
We consider this question for the $\ANN$ problem. 
In particular, is it possible to prove a logarithmically larger
lower bound for the oblivious $\ANN$ problem
as opposed to the best known non-oblivious $\ANN$ lower bound?
We answer in the affirmative in this work.


\subsection{Our Contributions}
Our main result is a stronger cell-probe lower bound
for the oblivious $\ANN$ problem,
which is $\tilde{\Omega}(\log n)$ higher than the best known cell-probe
lower bound for the non-oblivious $\ANN$ problem.

\begin{theorem}[Informal]  \label{thm_main}
Let $\ds$ be any dynamic, statistically oblivious data structure that solves
$(c,r)$-$\ANN_{d,\ell_1}$ over the $d$-dimensional Hamming cube, 
on an online sequence of $n$ insertions and queries, in the oblivious cell-probe model 
with word size $w$ and client storage of $m = o(n)$ bits.    
Then for some constant $c > 1$ and $r = \Theta(d)$, 
$\ds$ must have worst case per-operation running time
$$\Omega\left( \frac{d \cdot \log(n/m)}{(\log(w \log n))^2} \right).$$
In the natural setting  of $m \le n^{1-\rho}$ and $w = \Theta(\log n)$, the operational time is at least $\Omega(d\log n/(\log\log n)^2)$.
\end{theorem}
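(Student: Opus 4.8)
The strategy is to combine the \emph{chronogram}/epoch method with a \emph{cell--sampling} encoding argument in the style of Panigrahy--Talwar--Wieder for $\ANN$, and to use obliviousness as the new ingredient that lets a \emph{single} query be charged against \emph{every} epoch at once. Fix a branching parameter $\beta=\Theta((w\log n)^2)$; we may assume the update time $t_{up}$ satisfies $t_{up}\le d\,w\log^2 n$, since otherwise the claimed per--operation bound already holds for insertions. Partition the $n$ insertions into $\ell=\Theta(\log_\beta n)=\Theta(\log n/\log(w\log n))$ epochs, where epoch $i$ (counting from the oldest) inserts $n_i=\Theta(\beta^{\ell-i})$ uniformly random points of $\{0,1\}^d$, so recent epochs are geometrically smaller and $\sum_i n_i=\Theta(n)$. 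Let $C_i$ be the set of cells whose last write (among the insertions) happened in epoch $i$, so $|C_i|\le t_{up}n_i$. For each $i$, let $\cI_i$ be the input distribution performing these insertions and then issuing the query $q=p^\ast\oplus\nu$, where $p^\ast$ is one of the points inserted in epoch $i$ and $\nu$ is a uniformly random weight--$r$ vector, with $r=\Theta(d)$ and the constant $c>1$ chosen so that $cr<d/2$ and $n\cdot\binom{d}{\le 2cr}\ll 2^d$; under this promise, with high probability over the data $p^\ast$ is the \emph{unique} point within distance $cr$ of $q$, hence a correct answer \emph{equals} $p^\ast$.

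\paragraph{Per--epoch bound.}
For each ``useful'' epoch $i$ (those with $n_i d\ge 10m$; there are $\Theta(\log_\beta(n/m))=\Theta(\log(n/m)/\log(w\log n))$ of them) I would show that under $\cI_i$ the query reads $\Omega(d/\log(w\log n))$ cells of $C_i$ in expectation. This is the dynamic analogue of the static $\Omega(d/\log(sw/nd))$ bound, with the ``space'' $s$ replaced by $|C_i|\le t_{up}n_i$ and the point count by $n_i$, so the denominator becomes $\log(t_{up}w/d)=\Theta(\log(w\log n))$. The mechanism is a cell--sampling encoding argument: assuming the query probes few cells of $C_i$, sample each cell of $C_i$ into a set $S$ with a suitable probability $p=\Theta(d/(t_{up}w))$; a virtual query $p_j\oplus\nu$ is ``resolved'' if every cell of $C_i$ it probes lies in $S$, so many of the $n_i\binom{d}{r}$ virtual queries are resolved and they cover many of the $n_i$ epoch--$i$ points. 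One then encodes the epoch--$i$ data $X_i$ given the other epochs' data $X_{-i}$ and shared public coins: the decoder re--runs $\ds$ through epoch $i-1$ to recover all earlier cells and client state, is handed the contents of $S$, the few cells written in epochs $i{+}1,\dots,\ell$, the final client memory, and the epoch--$i$ points not covered by a resolved query, and then enumerates all $q\in\{0,1\}^d$, simulating each (a query is detectably unresolved the moment it would probe a cell of $C_i\setminus S$) to read off the near--neighbour of every resolved query and hence all of $X_i$. Comparing the encoding length to $H(X_i\mid X_{-i})=n_i d$ forces the number of epoch--$i$ probes to be $\Omega(d/\log(w\log n))$. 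Here obliviousness already pays off: the set of \emph{addresses} written in epoch $i$ (and those probed by each simulated query) has a data--independent distribution, so it is determined by the shared coins and costs the decoder nothing.

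\paragraph{Assembling via obliviousness.}
The per--epoch bounds are for \emph{different} distributions $\cI_i$, and a single run plants $p^\ast$ in only one epoch; obliviousness bridges this. All the $\cI_i$ issue the same sequence of operation types and lengths ($n$ insertions, then one query), differing only in data values, so $\delta$--statistical obliviousness makes the distribution of the full access transcript $\delta$--close across all $\cI_i$. Since both the query's probe set $P$ and every $C_i$ are functions of that transcript, $\E_{\cI_1}[\,|P\cap C_i|\,]\ge \E_{\cI_i}[\,|P\cap C_i|\,]-\delta\cdot t_{\mathrm{query}}=\Omega(d/\log(w\log n))-\delta\,t_{\mathrm{query}}$ for every useful $i$. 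Summing over the $\Theta(\log(n/m)/\log(w\log n))$ useful epochs and using that the $C_i$ are disjoint, $t_{\mathrm{query}}\ge\E_{\cI_1}[\,|P|\,]\ge\sum_i\E_{\cI_1}[\,|P\cap C_i|\,]$; rearranging (for $\delta$ small enough, which is what statistical obliviousness supplies) gives $t_{\mathrm{query}}=\Omega\!\left(\dfrac{d\log(n/m)}{(\log(w\log n))^{2}}\right)$, and in the regime $m\le n^{1-\rho}$, $w=\Theta(\log n)$ this is $\Omega(d\log n/(\log\log n)^2)$.

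\paragraph{Where the difficulty lies.}
The main obstacle is the careful integration of obliviousness into the encoding argument rather than the chronogram skeleton, which is routine. Two points need real care: (i) the decoder's ability to reconstruct, from the shared coins alone, both which addresses are written during epoch $i$ and which are probed by each simulated query uses that $\ds$ is oblivious \emph{including during the query step}, and one must track how the slack in \emph{statistical} (not perfect) obliviousness propagates through both the transcript--coupling step and this address--recovery step — this is why the bound degrades gracefully only for suitably small $\delta$; (ii) one must ensure the ``unique near neighbour'' event holds with high probability \emph{simultaneously} over the exponentially many virtual queries in the per--epoch argument, which pins down the admissible constants $c$ and $r=\Theta(d)$, and must check that query operations writing to memory do not corrupt the chronogram (handled by the usual charging). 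A secondary nuisance is the parameter bookkeeping that reduces everything to the regime $t_{up}=\poly(w,\log n)$ so that $\log(t_{up}w/d)=\Theta(\log(w\log n))$; outside that regime the update time alone witnesses the lower bound.
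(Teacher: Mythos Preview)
Your overall architecture matches the paper's: chronogram with geometrically decreasing epochs, a per–epoch cell–sampling encoding, and obliviousness used both inside the encoding and to transfer the per–epoch bound to a single query. The hard distributions differ (the paper assigns each epoch its own disjoint subcube of $\{0,1\}^d$ via far–apart prefixes, whereas you plant the query near a random epoch–$i$ point over the full cube), but that is a legitimate variant and not a problem in itself.

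The genuine gap is your sentence ``the set of addresses written in epoch $i$ \ldots\ has a data--independent distribution, so it is determined by the shared coins and costs the decoder nothing.'' This does not follow, and it is precisely the obstacle the paper identifies as the heart of the dynamic argument. Statistical (even perfect) obliviousness only says that the \emph{distribution} of the access transcript over the internal randomness $\cR$ is (close to) the same for all inputs; it does \emph{not} say that for a fixed $\cR$ the accessed addresses are the same function of $\cR$ regardless of the data. In particular, for a given $\cR$ the set $C_i^{\mathrm{addr}}(\bU)$ can still depend on $\bU_i$, so your decoder cannot recover it from public coins alone, and without it cannot recognise which simulated queries are resolved. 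The paper's fix is not cosmetic: it gives the decoder $C_i^{\mathrm{addr}}(\bU)$ as explicit side information and then proves, via a ``reverse Pinsker'' inequality, that $I(\bU_i; C_i^{\mathrm{addr}}(\bU)) = o(n_i d)$ whenever the data structure is $O(1/\log^2 n)$–statistically oblivious, so the conditional entropy $H(\bU_i \mid C_i^{\mathrm{addr}}(\bU))$ stays essentially $n_i d$ and Shannon's bound still bites. You flag this as something to ``track'' in the difficulties paragraph, but that is exactly the missing idea, not a routine slack computation; without it the encoding does not close.

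A secondary issue is parameter slack. With your stated assumption $t_{up}\le d\,w\log^2 n$ and $\beta=\Theta((w\log n)^2)$, the cells written in later epochs number roughly $n_i t_{up}/\beta \approx n_i d/w$, and encoding their addresses and contents costs $\Theta(n_i d)$ bits, which swamps the saving. The paper takes $\beta=(w t_u)^2$, which makes this cost $O(n_i/(w t_u))\cdot 2w = O(n_i)$; equivalently, you should assume $t_{up}$ is at most the target bound (not $d\,w\log^2 n$) before fixing $\beta$.
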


To the best of our knowledge, this is the first time that
a lower bound of $\omega(d)$ has been successfully proved
for $\ANN$ in the cell-probe model.
This is also the first oblivious cell-probe lower bound exceeding $\omega(\log n)$.
Previous works on oblivious cell-probe lower bounds have focused
on data structures with $O(\sqrt{\log n})$ or smaller complexity for their non-oblivious
counterparts (such as RAMs~\cite{LN18,PY18} as well as stacks, queues, deques,
priority queues and search trees~\cite{JLN18}) and peaked at $\Omega(\log n)$.
On the technical side, we remark that our work is the first to apply the technique
of Larsen~\cite{Lar12} of
combining the chronogram~\cite{FS89} with cell sampling~\cite{PTW10}
to prove a lower bound on privacy-preserving data structures. 
So far, these techniques could not be leveraged to prove higher bounds 
in the oblivious cell-probe model.


To complement our main result, 
we present a variant of the reduction by Bentley and Saxe~\cite{BS80}, 
who showed that dynamic data structures can be built in a black-box fashion 
from their static counterparts, for the special class 
of {\em decomposable} problems (which include many natural variants of
near-neighbors search, range searching and any class of linear queries).
We show that
any oblivious static data structure solving a decomposable problem
can be transformed
into an oblivious dynamic data structure with only an additional
logarithmic overhead.

\begin{theorem}[Informal]
If there exists an oblivious static data structure for a decomposable problem $\calP$ of $n$ items
with storage of $S^\st(n)$ cells, preprocessing of $P^\st(n)$ cell-probes and
amortized $Q^\st(n)$ cell
probes for queries,
then there exists an oblivious dynamic data structure for $\calP$ using
$S^\dy(n) = O(S^\st(n))$ cells of storage, preprocessing of
$P^\dy(n) = P^\st(n)$ cell probes,
	amortized $Q^\dy(n) = O(\log n \cdot Q^\st(n) + \log n \cdot P^\st(n)/n)$ cell probes for each query/update operation.
\end{theorem}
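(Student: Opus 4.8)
The plan is to adapt the classical Bentley--Saxe logarithmic decomposition~\cite{BS80} and make every step of it access-pattern oblivious. Recall the construction: after $k$ insertions the data set is partitioned into dyadic blocks $B_{i_1},\dots,B_{i_t}$ whose sizes are exactly the powers of two appearing in the binary representation of $k$, and on each block $B_i$ we keep a fresh instance $\ds^\st_i$ of the static structure built on $B_i$. An insertion behaves like incrementing a binary counter: the new element together with the contents of all blocks $B_0,\dots,B_{j-1}$ are merged into a new block $B_j$, on which a static structure is rebuilt from scratch. A query fans out to all $\le\log n$ live blocks, queries each $\ds^\st_i$, and combines the $\le\log n$ answers with the decomposability operator of $\calP$. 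For the oblivious version I would make two modifications. First, alongside each block $B_i$ I store the \emph{explicit list} $L_i$ of its elements in a canonical order; this way a rebuild never asks a static structure to enumerate its own data set (which it need not support) --- it merely scans and concatenates a few lists and calls the static preprocessing routine on the result. Second, I scrub all data-dependent control flow: a query always probes every live block and runs the combiner to the end (no early termination), and every operation is padded so that an insertion and a query induce the same shape of access sequence --- concretely, each operation performs both a (possibly dummy) query against all current blocks and a (possibly dummy) insertion whose ``element'' is a $\mathsf{null}$ symbol. Decomposability guarantees that inserting a $\mathsf{null}$ element (for $\ANN$, a point at distance $\infty$) changes no answer, so correctness is preserved, while the counter --- and hence the schedule of rebuilds --- depends only on the number of operations performed so far.

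Efficiency is the textbook amortized analysis. Block $i$ is rebuilt once every $2^i$ insertions at cost $P^\st(2^i)+O(2^i)$, the additive term being for scanning and rewriting element lists; it therefore contributes amortized $O(P^\st(2^i)/2^i+1)$ per insertion, and summing over the $\le\log n$ levels under the mild regularity assumption (implicit in the statement) that $P^\st(n)/n$ is nondecreasing gives amortized $O(\log n\cdot P^\st(n)/n)$ per update. A query costs $\sum_{i\le\log n}Q^\st(2^i)+O(\log n)=O(\log n\cdot Q^\st(n))$ combiner steps. Since the theorem charges one per-operation bound to both updates and queries, and (after padding) every operation performs both pieces of work, the stated $O(\log n\cdot Q^\st(n)+\log n\cdot P^\st(n)/n)$ follows. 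Bulk preprocessing of $n$ items builds static structures on the $\le\log n$ dyadic blocks of the binary representation of $n$; superadditivity of $P^\st$ (equivalently, monotonicity of $P^\st(n)/n$) collapses $\sum_j P^\st(2^{i_j})$ to $O(P^\st(n))$. The space is $\sum_{i\le\log n}S^\st(2^i)+O(n)$ for the lists, which is $O(S^\st(n))$ under the same regularity together with $S^\st(n)=\Omega(n)$, which holds for $\ANN$.

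It remains to verify obliviousness. Decompose the transcript of access patterns produced by the dynamic structure on an operation sequence $O$ into (a) a \emph{skeleton} --- which bookkeeping cells and element-list cells are touched, and which blocks are (re)built at which times --- and (b) the internal access patterns of the individual static instances $\ds^\st_i$ created over the run. By the two modifications above the skeleton is a fixed deterministic function of the \emph{length} of $O$ alone, carrying no information about the inserted points or query points. Each static instance, over its lifetime, sees one preprocessing call on a list of some fixed length followed by a number of queries whose count and timing are again read off the skeleton; hence by $\delta$-obliviousness of the static structure, its internal access pattern lies within statistical distance $\delta$ of a distribution depending only on these public parameters. Distinct instances draw independent randomness, so the joint pattern is a product distribution, and a hybrid argument over the $\sum_{i\le\log n}n/2^i=O(n)$ static instances instantiated during an $n$-operation run shows that $\|\view_\ds(O_1)-\view_\ds(O_2)\|\le O(n)\cdot\delta$ for any two equal-length sequences $O_1,O_2$. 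Since statistical obliviousness asks for $\delta=\negl(n)$, the loss $O(n)\delta$ is still negligible, so the dynamized structure is oblivious.

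The part that needs the most care --- and where the argument genuinely diverges from the non-oblivious Bentley--Saxe reduction --- is precisely these two modifications: a static structure is a black box that need not let us read back its data set, so we must carry the explicit lists $L_i$ (costing $O(n)$ extra cells and $O(1)$ extra probes per element moved), and we must eliminate every data-dependent branch, which is where decomposability is invoked a second time (beyond combining answers) to justify that padding inserts with $\mathsf{null}$ elements is answer-preserving. Everything else is the standard amortized binary-counter accounting.
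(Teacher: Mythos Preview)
Your proposal is correct and follows essentially the same approach as the paper: both use the Bentley--Saxe logarithmic decomposition, store the underlying element lists explicitly alongside each static instance, pad every operation to perform both a (possibly dummy) query and a (possibly dummy) insert so that the rebuild schedule is a deterministic function of the operation count alone, and derive obliviousness from this data-independent skeleton together with the obliviousness of each static instance. Your write-up is in fact slightly more careful than the paper's in two respects --- you make explicit the regularity assumptions on $S^\st,P^\st,Q^\st$ needed to collapse the sums $\sum_i S^\st(2^i)$, etc., and you give a quantitative hybrid bound $O(n)\cdot\delta$ on the statistical loss --- whereas the paper simply asserts obliviousness from the deterministic schedule.
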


The above theorem states that the largest separation between
oblivious cell-probe lower bounds for static and dynamic structures
solving decomposable problems can be at most logarithmic.
One can view the chronogram technique as creating a dynamic data structure
lower bound by boosting a static data structure lower bound (via
the cell sampling method) by an $\tilde{\Omega}(\log n)$ factor.
Therefore, the chronogram can be viewed as optimal for decomposable problems
even in the oblivious model.

\subsection{Technical Overview}     

The high-level approach behind the proof of Theorem \ref{thm_main} 
is to exploit obliviousness in a new (and subtle) way in order to compose a variation of
the \emph{static} cell-sampling lower bound for $\ANN$ in~\cite{PTW10} together
with the chronogram method~\cite{FS89}.
While this template was the technical approach of
several previous dynamic data structure lower bounds for queries 
with ``error-correcting codes" (ECC) properties     
(such as polynomial evaluation~\cite{Lar12}, range counting~\cite{Lar12b} and
online matrix-multiplication~\cite{CGL15}), 
this program is doomed to fail for $\ANN$ for two fundamental reasons.
The first reason is that the chronogram method requires the underlying data structure problem  
to have an ``ECC-like" property, namely, that any \emph{local} modification of the database 
changes the answer to (say) \emph{half} of the queries
(in other words, a random query is {\em sensitive}  
to even a single update in the data set).
In contrast, $\ANN$ queries are sensitive 
only to updates in an exponentially-small volumed ball around the query point. 
This already impedes the application of the chronogram method.  
The second, more subtle and technically challenging problem, is the fact that in the $\ANN$ problem,  
only a tiny fraction ($1/\poly(n)$) of queries
actually reveal information 
about the underlying data set -- these are queries which reside \emph{close} to the data set and hence 
may report an input point (we call these ``yes" queries). 
As explained below, this feature of $\ANN$ turns out to be a significant barrier in
carrying over the static cell-sampling argument to the \emph{dynamic} setting (as opposed 
to cell-sampling lower bounds for ``$k$-wise independent" queries),  
and overcoming this problem is the heart of the paper. 
Surpassing this obstacle also entailed us to construct an alternative 
information-theoretic proof of \cite{PTW10}'s \emph{static} lower bound for the standard
(non-oblivious) $\ANN$ problem, which is key for scaling it to the dynamic setting (and,
as a bonus, also improves the parameters of the lower bound in~\cite{PTW10}).

In order to overcome the aforementioned two challenges, we use obliviousness in \emph{two different} ways.  
The first one, which is more standard
(in light of recent works~\cite{LN18,PY18}), 
overcomes the first problem, mentioned above, of {\em insensitivity}
of near-neighbor queries to the chronogram construction.
Recall that the chronogram method partitions a sequence of $\Theta(n)$ random update
operations into $\tilde{\Theta}(\lg n)$ geometrically decreasing intervals 
(``{\em epochs}"), where the hope is to show that a random query is 
\emph{simulteneously} sensitive to (essentially) all epochs. 
As discussed above, $\ANN$ lacks this property, and it is not hard to see that 
if, for example, updates are drawn uniformly and independently at random,
then any query will only be sensitive to the first $O(1)$ epochs 
with overwhelming probability (due to the geometric decay of epochs, 
which is essential, as it reduces a dynamic lower bound to that of solving logarithmically many 
\emph{independent} static problems, one per epoch). 
We circumvent this issue by using a simple geometric partitioning trick of the hypercube together 
with the (computational) indistinguishability constraint of ORAMs. This argument is key to the proof, 
as it reverses the quantifiers: it implies that for oblivious data structures, it is enough to show that 
for each epoch, there is \emph{some} distribution on $\ANN$ queries that must read 
$\tilde{\Omega}(d)$ cells from the epoch (as opposed to a \emph{single} distribution which 
is sensitive to \emph{all} epochs). Indeed, assuming this (much) weaker condition, if the data structure does not probe 
$\tilde{\Omega}(d)$ cells from \emph{every} epoch, an adversary (even when computationally bounded)
can learn information about the query's location (in particular, which partition the query belongs to), contradicting obliviousness.



The second way in which we exploit obliviousness is much more subtle
and illuminates
the difficulty in carrying out cell-sampling static
lower bounds in dynamic settings for
data structure problems (like $\ANN$) where only $o(1)$-fraction of the queries
reveal useful information.
Before diving into the dynamic case,  we briefly explain our modifications
of the static lower bound which enables 
a higher lower bound in the dynamic setting.
At a high level, the cell-sampling argument of~\cite{PTW10}
shows that for very efficient, static data structures, there exists
a small number of {\em memory cells} $T$ of the data structure
that are the only cells
probed by many queries.
These queries are referred to as {\em resolved queries}
and denoted by $Q(T)$.
The main idea of cell sampling is to show that the queries in $Q(T)$ reveal
more bits of information about the underlying data set (denoted by $\bP$) than
the number of bits that can be stored in the sampled cells $T$, which would
lead to a contradiction.
However, in the $\ANN$ setting, showing that the queries in $Q(T)$
reveal enough information about the underlying data set $\bP$
is highly nontrivial -- 
One way to prove this statement is to show that
the resolved queries are essentially \emph{independent} of the
underlying data set, i.e., $Q(T) \perp \bP$.
If this were true, then a standard \emph{metric expansion} argument shows that the
{\em neighborhood} of distance $r$ surrounding all resolved queries $Q(T)$,
covers at least \emph{half} of the boolean hypercube.
As a result, all points landing in the neighborhood of $Q(T)$ will be reported
by at least one query in $Q(T)$.
If the points in the data set are generated
\emph{uniformly and independently} distributed conditioned on $Q(T)$,
it can be shown that a constant fraction of data set points in $\bP$
will fall into neighborhood of $Q(T)$ except with negligible probability.
Hence, a constant fraction of points in $\bP$ will be recovered by using only
the contents of sampled cells $T$.
Alas, for \emph{adaptive} data structures,
the resolved queries could depend heavily on the content of the cells,
and this \emph{correlates} $Q(T)$ and the database $\bP$.
In the work of~\cite{PTW10}, the authors handle this correlation using a
careful,
adaptive cell-sampling argument combined
with a union-bound over all possible memory states of the data structure,
which effectively breaks the dependence between resolved queries $Q(T)$
and the data set $\bP$.
We present an alternative method of proving independence using
information theoretic arguments.
Intuitively, even though $Q(T)$ and $\bP$ are indeed correlated random variables  
in the general adaptive setting, we argue that this correlation cannot be too large: 
the set of resolved queries $Q(T)$ are completely
determined by the addresses and contents of the sampled cells $T$, 
as one can determine whether $q \in Q(T)$
by executing $q$ and checking if $q$ ever probes a cell outside of $T$.
Since $T$ is a small set of cells,
the data set $\bP$ and the set of
resolved queries $Q(T)$ have low mutual information by a data processing inequality.
We formalize this intuition by constructing an impossible 
``geometric packing" compression argument of the data set $\bP$ using only the
sampled cells $T$.
These ideas also allow us to use {\em one-round} cell sampling~\cite{Lar12b}
as opposed to multiple-round cell-sampling, which slightly improves
the lower bound shown in~\cite{PTW10}.

Moving back to the dynamic setting, our new arguments still break down
due to the fact that memory cells may be overwritten at different points in time.
The typical method for proving dynamic lower bounds~\cite{Lar12,Lar12b}
composes the cell sampling technique and chronogram method.
A random update sequence $\bU$ is partitioned into geometrically-decreasing
sized epochs. For epoch $i$,
we denote $C_i(\bU)$ as all cells that were last overwritten by updates in epoch $i$, $\bU_i$.
Next, the cell sampling technique is applied to each $C_i(\bU)$ to
find a small subset of sampled cells $T_i \subseteq C_i(\bU)$ such that
for almost all queries, the only cells probed in $C_i(\bU)$ appear in $T_i$.
We denote these resolved queries by $Q_i(T_i)$.
Once again, we need to show that the answers of resolved queries $Q_i(T_i)$
reveal a lot of information about points inserted in $\bU_i$.
Unfortunately, our previous approach fails as
it is \emph{impossible} to determine $Q_i(T_i)$ using
only the sampled cells $T_i$.
Note, if a query probes a cell outside of $T_i$, one cannot determine
whether the cell belongs to $C_i(\bU)$ or not.
Therefore, one needs to know the addresses
of cells in $C_i(\bU)$, denoted by $C^{\addr}_i(\bU)$,
to determine $Q_i(T_i)$.
Unfortunately,
the number of bits needed
to express $C^{\addr}_i(\bU)$ may be very large and contain
significant information about $\bU_i$.
So, we can no longer argue that the set of resolved queries $Q_i(T_i)$ is
determined by a low-entropy random variable as in the static case.


This is where \emph{statistical obliviousness} comes to the rescue.
The main observation is that the 
\emph{addresses} of cells last overwritten by updates in epoch $i$, $C^{\addr}_i(\bU)$,
cannot reveal too much information about the updates in $\bU_i$ for any  
sufficiently statistically oblivious data structure.
We prove this using a certain ``reverse Pinsker inequality" which allows 
us to conclude that the mutual information $I(\bU_i;C^{\addr}_i(\bU)) = o(|\bU_i|)$ bits
for any $O(1/\log^2 n)$-statistically oblivious data structure. We note this inequality
may be of independent interest 
to other oblivious lower bounds.
Now, we can see that the address sequence $C^{\addr}_i(\bU)$, together with the \emph{small} set
of sampled cells $T_i \subseteq C_i(\bU)$ from the $i$-th epoch,
\emph{completely determine} the resolved query set $Q_i(T_i)$.
Therefore, a data processing argument once again asserts that the
large resolved query set $Q_i(T_i)$ is
\emph{almost independent} of the updates $\bU_i$.
By a packing argument (similar to the static case), we can show
that a constant fraction of the points in $\bU_i$ fall into the neighborhood around
the resolved queries
$Q_i(T_i)$ and each of these points
will be returned by at least one resolved query.
As a result, the answers of resolved queries reveal more information
about $\bU_i$ than the number of bits that can be stored in the sampled cells $T_i$
providing our desired contradiction.
We conclude that at least $\tilde{\Omega}(d)$ cells must be probed from each epoch.
Combined with our first application of obliviousness, we show that
$\tilde{\Omega}(d\log n)$ cells must be probed from all epochs.

\subsection{Related Work} 

The cell-probe model was introduced by Yao~\cite{Yao81} 
as the most abstract (and compelling) model for proving lower bounds on the operational time of data structures, 
as it is agnostic to implementation or hardware details, and hence captures any imaginable 
data structure. The \emph{chronogram technique} of Fredman and Saks~\cite{FS89} was the first to prove
$\Omega(\log n / \log\log n)$ dynamic cell-probe lower bounds.
P\v{a}tra\c{s}cu and Demaine~\cite{PD06} later introduced the
information transfer technique which was able to prove $\Omega(\log n)$
lower bounds. Larsen~\cite{Lar12} was able to combine the chronogram
with the cell-sampling technique of static data structures~\cite{PTW10}
to prove an $\Omega((\log n / \log\log n)^2)$ for \emph{range searching} problems, which remains
the highest cell-probe lower bound to date for any dynamic search problem.
Recently, Larsen {\em et al.}~\cite{LWY18} exhibited a new technique 
for proving $\tilde{\Omega}(\log^{1.5} n)$ cell-probe lower bounds on \emph{decision} 
data structure problems, circumventing the need for large outputs (answer length) in previous lower bounds.

\paragraph{Oblivious cell-probe lower bounds.}{
The seminal work of Larsen and Nielsen~\cite{LN18} presented
the first cell-probe lower bound for \emph{oblivious data structures}, in which they proved 
a (tight) $\Omega(\log n)$ lower bound for ORAMs.
Jacob {\em et al.}~\cite{JLN18} show $\Omega(\log n)$ cell-probe lower
bounds for oblivious stacks, queues, deques, priority queues
and search trees.
Both~\cite{LN18,JLN18} adapt the information transfer technique of
P\v{a}tra\c{s}cu and Demaine~\cite{PD06}.
Persiano and Yeo~\cite{PY18}
show an $\Omega(\log n)$ lower bound for differentially private RAMs
which have weaker security notions than ORAMs
using the chronogram technique originally introduced by
Fredman and Saks~\cite{FS89} with modifications by
P\v{a}tra\c{s}cu~\cite{Pat08}.
Another line of work has investigated the hardness of lower bounds
for other variants of ORAMs.
Boyle and Naor~\cite{BN16} show that lower bounds
for offline ORAMs (where all operations are given in batch before
execution) imply lower bounds for sorting circuits.
Weiss and Wichs~\cite{WW18} show that lower bounds for
online, read-only ORAMs imply lower bounds for either
sorting circuits and/or locally decodable codes.}

\paragraph{Near-neighbor lower bounds.}
There have been many previous works on lower bounds for non-oblivious
near(est)-neighbors problems. The following series
of lower bound results considered deterministic algorithms
in polynomial space~\cite{BOR99,BR02,CCG03,Liu04}.
Chakrabarti and Regev~\cite{CR04} present tight lower bounds
for the approximate-\emph{nearest}-neighbor problem for possibly randomized
algorithms that use polynomial space.
Several later works consider various lower bounds for near(est)-neighbors 
with different space requirements, the ability to use randomness
and different metric spaces~\cite{CR04,PT06,AIP06,ACP08,PTW08}.
As mentioned before, the highest
cell-probe lower bound for dynamic $\ANN$ is the static 
$\Omega(d/\log(sw/dn))$ lower bound of Wang and Yin~\cite{wangyin}. In fact,  
all the above works prove lower bounds on static near-neighbor search
where the data set is fixed and no points may be added.

\section{Preliminaries}
We present a formal definition
of the oblivious cell-probe model as well as the
$\ANN$ problem.

\subsection{Oblivious Cell Probe Model}

We will prove our lower bounds in the {\em oblivious cell-probe} model
which was introduced by Larsen and Nielsen~\cite{LN18} and is an
extension of the original cell-probe introduced by Yao~\cite{Yao81}.
The oblivious cell-probe model consists of two parties:
the {\em client} and the {\em server}.
The client outsources the storage of data to the adversarial
server which is considered to be honest-but-curious
(also referred to as semi-honest).
In addition, the client wishes to perform some set of operations
over the outsourced data in an {\em oblivious} manner.
Oblivousness refers to the the client's wishes
to hide the operations performed on the data
from the adversarial server that views
the sequence of cells probed in the server's memory.
Note the adversary's view does not contain the contents of server
memory as a way to separate the security of accessing data and
securing the contents of data.
We now describe the oblivious cell-probe model in detail.

In the oblivious cell-probe model, the server's memory consists of cells
with $w$ bits. Each cell is given a unique address from the set of
integers $[K]$. It is assumed that all cell addresses can fit into a single
word which means that $w \ge \ceil{\log_2 K}$.
The client's memory consists of $m$ bits.
Additionally, there exists an arbitrarily long, finite length
binary string $\cR$ which contains all the randomness
that will be used by the data structure.
For cryptographic purposes, $\cR$ may also be used as a random oracle.
The binary string $\cR$ is chosen uniformly at random before
the data structure starts processing any operations. As a result,
$\cR$ is independent of any operations of the data structure.

A data structure in the oblivious cell-probe model performs
operations that only involve either a {\em cell probe} to server memory
or accessing bits on client memory. During a cell probe in server memory,
the data structure is able to read or overwrite the contents of the probed
cell.
The cost of any operation is measured by the number of cells that
are probed on the server's memory.
The accesses to bits in client memory are considered free for the
data structure. Any access to bits in the random string $\cR$
are also free.
We denote the {\em expected query cost} to be the maximum
over all sequences of operations $O$ and query $q$ of
the expected number of cell probes performed when answering query $q$ over the
random string $\cR$
after processing the all operations in $O$. We denote the {\em worst
case update cost} as the maximum over all sequences of
operations $O$, update $u$ and random string $\cR$
of the number of cells probed when processing update $u$ after
processing all operations in $O$.

We now move onto the privacy requirements of data structures
in the oblivious cell-probe model. The random variable
$\view_{\ds}(Q)$ as the {\em adversary's view} of the data structure
$\ds$ processing a sequence of operations where randomness
is over the choice of the random string $\cR$.
The adversary's view, $\view_{\ds}(O)$, will contain the addresses of cells that are
probed by $\ds$ when processing
$O$.
Finally, we assume that $\ds$ must process a sequence of
operations in an online manner. That is, $\ds$ must finish executing
one operation before receiving the next operation.
Furthermore, the adversary is aware when execution of one operation
finishes and the execution of another operation begins.
As a result, for any sequence $O = (\op_1,\ldots,\op_n)$,
we can decompose the adversary's view as $\view_{\ds}(O) = (\view_{\ds}(\op_1), \ldots, \view_{\ds}(\op_n))$.
Unlike the previous model, we will assume statistical security
instead of computational security.
We now present a formal definition of the security of an oblivious cell-probe
data structure.

\begin{definition}
A cell-probe data structure $\ds$
is $\delta$-statistically oblivious if for any two equal length sequences
$O_1$ and $O_2$ consisting of
valid operations, then the statistical distances of $\view_\ds(O_1)$ and
$\view_\ds(O_2)$ satisfy
$$|\view_\ds(O_1) - \view_\ds(O_2)| \le \delta.$$
\end{definition}

Throughout the rest of our work, we will consider $\delta$-statistical
obliviousness with $\delta \le 1/\log^{2} n$.
Note that the above definition is a much weaker definition
than previous definitions of statistical obliviousness in cryptography
as the distinguishing probability need be at most $1/\log^2 n$ as opposed
to being a negligible function of $n$. However, as we
are proving a lower bound, a weaker notion of obliviousness results in
strong lower bounds.

We now briefly describe the implications of cell-probe lower bounds
in the client-server setting. The majority of previous ORAM works
considered the server to be {\em passive storage}, which means that the server
does not perform any computation beyond retrieving and overwriting the contents of cell
at the request of the client. In this case, a cell-probe lower bound implies a bandwidth
lower bound in the client-server setting for any oblivious data structure.
On the other hand, if we consider the case when the server can perform arbitrary computation,
any cell-probe lower bound implies a lower bound on server computation.




\subsection{Approximate-Near-Neighbor ($\ANN$) Problem}

We now formally define the $(c,r)$-approximate-near-neighbor problem
over the $d$-dimensional boolean hypercube using the $\ell_1$ distance as the measure.
In our work, we focus on the online version which allows insertion of points into the dataset.
Let $U := \{0,1\}^d$ denote the set of all points in the space.
If the $\uop$ operation is called with the same point $p \in U$ twice,
then the second $\uop$ operation is ignored. We now
formally describe the problem.

\begin{definition}[Online, Dynamic $(c,r)$-$\ANN_{d, \ell_1}$]
The dynamic $(c,r)$-approximate-near-neighbor problem over the
$d$-dimensional boolean hypercube endowed with the $\ell_1$ distance measure
asks to design a data structure
that maintains an online dataset $S\subset U$
under an online sequence of $n$ operations of the following two types:   
\begin{enumerate}
	\item $\uop(p), p \in U$: Insert the point $p$ if it does not already exist in $S$;
	\item $\qop(q), q \in U$: If there exists a unique
		$p \in S$ such that $\ell_1(p, q) \le r$, then report any
		any $p' \in S$ such that $\ell_1(p',q) \le cr$.
		If all points $p \in S$ are such that $\ell_1(p, q) > r$, then the output should be $\perp$.
\end{enumerate}
\end{definition}

\subsection{Decomposable Problems}
We now define decomposable problems. From a high level, a problem is
decomposable if the problem may be solved on partitions of any
data set and the results can be combined to give the result over
the entire data set. Many natural problems are decomposable such
as many variants of near-neighbors search, range counting and
interval stabbing.

\begin{definition}
A problem $\calP$ is {\em decomposable} if for any two disjoint data sets
$D_1$ and $D_2$ and any query $q$,
there exists a function $f$ that can be computed
in $O(1)$ time such that
$$
\calP(q, D_1 \cup D_2) = f(\calP(q, D_1), \calP(q, D_2)).
$$
\end{definition}

\section{Oblivious, Dynamic Lower Bound}

In this section, we prove a logarithmically larger lower bound for the
dynamic variant of the $\ANN$ problem compared to the previous, highest lower bound
for non-oblivious $\ANN$ by Wang and Yin~\cite{wangyin}.
We consider the $(c,r)$-$\ANN_{d, \ell_1}$ problem over a
$d$-dimensional boolean
hypercube with respect to the $\ell_1$ norm
where $d = \Omega(\log n)$ and $\Theta(n)$ will be number of points inserted
into the data set.
We denote $t_u$ as the worst case time for any $\uop$ operation
and $t_q$ as the expected time for any $\qop$ operation. 
For our oblivious lower bound, we consider the two party scenario
where a client stores $m$ bits that are free to access while the server
holds the cells consisting of the data structure's storage.
We prove the following lower bound:

\begin{theorem}
\label{thm:main}
Let $\ds$ be an randomized, dynamic, oblivious cell-probe data structure for $(c,r)$-$\ANN_{d, \ell_1}$
over a $d$-dimensional boolean hypercube where $d = \Omega(\log n)$
under the $\ell_1$ norm.
Let $w$ denote the cell size in bits, $S$
denote the number of cells stored by the server for the data structure and
$m$ denote the client storage in bits. If $m = o(n)$,
then there exists parameters of constant
$c \ge 1$ and $r = \Theta(d)$
and a sequence of $\Theta(n)$ 
operations such that
$$t_q = \Omega\left( \frac{d \cdot \log(n/m) }{(\log(t_u w))^2}\right).$$
\end{theorem}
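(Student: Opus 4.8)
The plan is to follow the composition of the chronogram method with a one-round cell-sampling argument, using statistical obliviousness twice in the way sketched in the technical overview. First I would set up the hard distribution: generate a sequence of $\Theta(n)$ operations consisting of $\Theta(n)$ random insertions, partitioned into $\Theta(\log n / \log(t_u w))$ geometrically decreasing \emph{epochs} $\bU_1, \ldots, \bU_\ell$ (epoch $i$ having $\beta^i$ updates for a suitable $\beta = (t_u w)^{\Theta(1)}$), followed by a single random query $\bq$ at the end. The points in $\bU_i$ are drawn from a structured distribution: I would tile (a relevant slab of) the Hamming cube into $\mathrm{poly}(n)$ disjoint regions, pick one region uniformly for epoch $i$, and place the epoch-$i$ points uniformly inside it. The query is then drawn, with probability $1/\ell$ say, to be a ``yes'' query near a random epoch-$i$ point. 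The geometric-partitioning trick plus obliviousness lets me argue: if the data structure failed to read $\tilde\Omega(d)$ cells last written during epoch $i$ when answering an epoch-$i$-query, an adversary could distinguish which epoch/region the query targets, violating $\delta$-obliviousness. This reverses the quantifiers so it suffices to lower bound cell reads from \emph{each} epoch separately.

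Next, fix an epoch $i$ and let $C_i(\bU)$ be the set of cells last overwritten by updates in $\bU_i$, with address set $C_i^{\addr}(\bU)$. The key obliviousness lemma I would prove (via a ``reverse Pinsker'' inequality) is $I(\bU_i ; C_i^{\addr}(\bU)) = o(|\bU_i| \cdot d)$ for any $O(1/\log^2 n)$-statistically-oblivious structure — intuitively, since the access pattern is nearly independent of the operations, the \emph{locations} last touched in epoch $i$ cannot encode much about which region epoch $i$ used. Then I would run one-round cell sampling: assuming for contradiction $t_q = o(d \log(n/m)/(\log(t_u w))^2)$, a random subset $T_i \subseteq C_i(\bU)$ of size $|C_i(\bU)|/(t_u w)^{\Theta(1)}$ is hit by an overwhelming fraction of ``yes'' queries targeting epoch $i$ using \emph{only} cells in $T_i$ (among epoch-$i$ cells); call these \emph{resolved} queries $Q_i(T_i)$. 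The crucial point is that $Q_i(T_i)$ is a deterministic function of $T_i$ together with $C_i^{\addr}(\bU)$ (and earlier-epoch data, which is independent of $\bU_i$): one simulates each candidate query and checks whether it ever leaves $T_i$ within $C_i$. By the data-processing inequality, $I(\bU_i; Q_i(T_i)) \le |T_i| \cdot O(w) + I(\bU_i; C_i^{\addr}(\bU)) = o(|\bU_i| \cdot d)$, so $Q_i(T_i)$ is almost independent of $\bU_i$.

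With near-independence in hand I would invoke a metric-expansion / packing argument (the alternative information-theoretic version of the Panigrahy–Thorup–Wegman static bound): since $Q_i(T_i)$ is a large set of queries nearly independent of $\bU_i$, the union of radius-$r$ balls around $Q_i(T_i)$ covers a constant fraction of the region used by $\bU_i$, so a constant fraction of the $|\bU_i|$ points land in that union and are each reported by some resolved query — meaning the answers to queries in $Q_i(T_i)$, together with $T_i$ and $C_i^{\addr}(\bU)$, let one reconstruct $\Omega(|\bU_i| \cdot d)$ bits of $\bU_i$. This contradicts the mutual-information bound $o(|\bU_i| \cdot d)$ from the previous step, so in fact $\ds$ must probe $\tilde\Omega(d)$ cells from $C_i(\bU)$ when answering an epoch-$i$ query; formally the surviving budget forces $|T_i|$ large, i.e. $t_q \cdot (\text{sampling overhead}) = \Omega(|C_i| \cdot \text{something})$, unwinding to $t_q = \tilde\Omega(d)$ reads \emph{per epoch}. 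Finally, summing the independent per-epoch contributions over all $\Theta(\log n / \log(t_u w))$ epochs — legitimate because the epochs are geometrically decaying, so an epoch-$i$ query probes essentially no cells from earlier epochs — and accounting for the $m = o(n)$ client memory (which can ``explain away'' only $o(n)$ bits total, hence at most a constant fraction of epochs), yields $t_q = \Omega(d \log(n/m)/(\log(t_u w))^2)$.

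I expect the main obstacle to be the second use of obliviousness: proving the reverse-Pinsker bound $I(\bU_i; C_i^{\addr}(\bU)) = o(|\bU_i| d)$ with the right dependence on $\delta$, $w$, and the epoch sizes, and then carefully threading it through the data-processing step so that it genuinely beats the $\Omega(|\bU_i| d)$ reconstruction bound rather than being swamped by it — this is exactly the place where the $o(1)$-fraction-of-informative-queries feature of $\ANN$ (unlike ECC-type problems) bites, and where the $\delta \le 1/\log^2 n$ hypothesis is consumed. A secondary technical point is making the metric-expansion/packing argument work against \emph{near}-independence rather than exact independence (a small correction term that must be shown not to destroy the constant-fraction covering), and ensuring the geometric tiling of the cube is coarse enough to be distinguishable by the adversary yet fine enough that each region still supports the packing bound.
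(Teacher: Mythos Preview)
Your proposal is essentially the same approach as the paper's: chronogram with geometrically decaying epochs, a geometric partition of the cube into per-epoch subcubes, one-round cell sampling per epoch, the reverse-Pinsker bound $I(\bU_i; C_i^{\addr}(\bU)) = o(n_i d)$ to control the dependence of the resolved-query set on $\bU_i$, a metric-expansion encoding contradiction, and finally obliviousness to boost the per-epoch bound to a single fixed query.

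Two small corrections worth noting. First, the paper uses \emph{fixed, publicly known} subcubes $P_i$ (via deterministic prefixes that are pairwise $\ell_1$-far), not randomly chosen regions; this makes the adversary's distinguishing test cleaner and avoids having to condition on the region choice in the encoding. Second, your stated justification for summing (``epochs are geometrically decaying, so an epoch-$i$ query probes essentially no cells from earlier epochs'') is not the right reason. The summation works because obliviousness forces a \emph{single} fixed query $q$ (chosen outside every $P_i$) to probe $\Omega(d/\log(t_u w))$ cells from \emph{each} $C_i(\bU)$ --- otherwise the adversary distinguishes $q$ from $\bq_i$ --- and the sets $C_i(\bU)$ are pairwise disjoint by definition, so the probe counts add. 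The geometric decay matters elsewhere (it makes $|C_{i-1}(\bU)| + \cdots + |C_0(\bU)|$ negligible in the encoding), not in the summation step. With those adjustments your plan matches the paper.
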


To prove Theorem \ref{thm:main}, we proceed with a ``geometric variation"  
of the chronogram argument in~\cite{Lar12} where our operation sequence consists of $\Theta(n)$ independent 
\emph{but not identically drawn} random $\uop$ operations, which we will describe later.
This random sequence of updates is followed by a single $\qop$ operation.
The $\uop$ operations
are partitioned into {\em epochs} whose sizes decrease exponentially by a parameter
$\beta \ge 2$ which will be defined later. All epochs will
contain at least $\max\{\sqrt{n}, m^2\}$ $\uop$ operations.
Each epoch will be indexed by an non-negative integer that increases in reverse chronological time.
Epoch 0 will consist of the last $\max\{\sqrt{n}, m^2\}$ $\uop$ operations before the query is performed, epoch
1 will consist of the last $\beta \cdot \max\{\sqrt{n}, m^2\}$ $\uop$ operations before epoch 1 and so forth.
Therefore, there will be $k := \Theta(\log_\beta (n/m))$ epochs.
For all epochs $i$ where $0 \le i < k$ will consist of exactly $n_i := \beta^i \cdot \max\{\sqrt{n}, m^2\}$ $\uop$ operations.

For notation, the sequence of $n$ $\uop$ operations are denoted by the random variable $\bU$.
We denote
the sequence of $\uop$ operations in any epoch indexed by $i$ using the random
variable $\bU_i$. Therefore, we can write $\bU = (\bU_{k-1}, \ldots, \bU_0)$.

\paragraph{Hard distribution.}
Given the partitioning of the $\Theta(n)$ $\uop$ operations into geometrically decaying epochs, 
we now define the hard distribution for our lower bound. 
In order to (later) exploit the obliviousness of the data structure, our 
hard distribution shall have a ``direct sum" structure, which is simple to 
design using the geometry of the $\ANN$ problem. 
Conceptually, the hard distribution will split the $d$-dimensional
boolean hypercube into \emph{disjoint subcubes} where each 
subcube is uniquely assigned to one of the epochs.
To this end, every epoch $i \in \{0,\ldots,k-1\}$ will be assigned a $d'$-dimensional
boolean subcube where $d' := \Theta(d)$ will
be determined later.
Each of the $\uop$ operations of any epoch $i$ will be generated independently
by picking a point from epoch $i$'s $d'$-dimensional boolean hypercube uniformly at random.

We now show how we split up the original $d$-dimensional boolean
hypercube into $k$ $d'$-dimensional boolean subcubes that
are disjoint. We choose the parameter $d > d'$
where $d'$ will be specified later. We assign each of the $k$ epochs
a unique prefix of $d - d'$ bits denoted by
$p_0,\ldots,p_{k-1} \in \{0,1\}^{d - d'}$ where $p_i$
is the prefix for epoch $i$. We will pick the prefixes in such a way
that for any $i \ne j \in [k]$, $\ell_1(p_i, p_j) > d'$.
To see that such a choice of prefixes exist, we consider the following
probabilistic method where we pick the $k$ prefixes of $d - d'$ bits
uniformly at random. For any two $i \ne j \in [k]$ and sufficiently
large $d = \Omega(\log n)$, we know that
$$
\Pr[\ell_1(p_i, p_j) < 0.49(d - d')] \le 1/n^3.
$$
By a Union bound over all $n^2$ possible pairs, we get
that there must exist some choice of $k$ prefixes
such that pairwise prefixes have $\ell_1$ distance at least $d'$
as long as $d \ge 4d'$. The $d'$-dimensional subcube for epoch $i$
is constructed as all points in the original $d$-dimensional subcube
restricted to the case that the $d-d'$ coordinates match the
prefix $p_i$.
We note that our choice of subcubes has the important property
that two points from different subcubes will be distance
at least $d'$ from each other as their prefixes already have
$\ell_1$ distance of at least $d'$.

Before continuing, we describe why this choice of hard distribution
is compatible with oblivious data structures.
Intuitively, our choice of hard distribution is very revealing for the choice of update
points.
An adversary is aware that updates from epoch $i$ will be completely contained
in the subcube assigned to epoch $i$. Furthermore, as all subcubes are pairwise disjoint,
two update points from different epochs cannot be from the same subcube. We will exploit
this fact in combination with the oblivious guarantees to prove lower bounds on the operational
cost of the final query. If, on average, a query point
does not probe many cells that were last overwritten
in some epoch $i$, then the adversary can simply rule out that the query point was chosen
from the disjoint subcube assigned to epoch $i$. This knowledge learned by the adversary
can be used to contradict the obliviousness property. As a result, we can show
that an oblivious data structure must query many cells last written from all epochs
to hide the identity of the query point even if the query needs no information
from some epochs.

Formally, we define the distribution of updates in epoch $i \in \{0,\ldots,k-1\}$, $\bU_i$,
as the product of $n_i$ identical distributions, $\mu_i$.
The distribution $\mu_i$ deterministically appends the prefix $p_i$ uniquely
assigned to epoch $i$ and picks the remaining $d'$ coordinates uniformly at
random.
We denote this $d'$-dimensional subcube using $P_i$.
The entire distribution of updates over all epochs, $\bU$, can be viewed as the product
of distribution $\bU = \bU_{k-1} \times \ldots \times \bU_0$.
Our hard query distribution $\bq$ will simply be to query any fixed point
that lies outside each of the subcubes $P_0,\ldots,P_{k-1}$.

We show that the probability that any two points inserted
during $\bU_i$ are too close is low.

\begin{lemma}
\label{lem:dis}
Let $\bU_i$ be the set of update points inserted in epoch $i$
according to the hard distribution. For sufficiently
large $d' = \Omega(\log n)$, there cannot
exist any query $q$ such that $\ell_1(u, q) \le 0.24d'$ and $\ell_1(v, q) \le 0.24d'$
for any two different points $u$ and $v$ chosen by $\bU_i$ except with probability
at most $1/n$.
\end{lemma}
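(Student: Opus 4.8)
The plan is to remove the quantifier over the (exponentially large) set of possible queries $q$ by a triangle-inequality reduction, and then control the resulting purely pairwise event with a Chernoff bound and a union bound over the $\binom{n_i}{2}$ pairs of update points in epoch $i$.

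First I would argue the reduction. Suppose some query $q$ together with two distinct points $u,v$ chosen by $\bU_i$ satisfied $\ell_1(u,q) \le 0.24 d'$ and $\ell_1(v,q) \le 0.24 d'$. Then by the triangle inequality for the $\ell_1$ metric, $\ell_1(u,v) \le \ell_1(u,q) + \ell_1(q,v) \le 0.48 d'$. Consequently, the ``bad'' event of the lemma is contained in $\cE := \{\,\exists\ \text{distinct } u,v \in \bU_i : \ell_1(u,v) \le 0.48 d'\,\}$, which no longer mentions $q$ at all, so it suffices to prove $\Pr[\cE] \le 1/n$. This is the only conceptual step; everything after it is routine.

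Next, fix two distinct update operations in epoch $i$ and let $u,v$ be the corresponding points. By construction both $u$ and $v$ agree with the fixed prefix $p_i$ on their first $d - d'$ coordinates, so these coordinates contribute nothing to $\ell_1(u,v)$, while the remaining $d'$ coordinates of $u$ and $v$ are uniform and independent. Hence $\ell_1(u,v)$ equals the number of the $d'$ free coordinates on which $u$ and $v$ disagree, i.e. $\ell_1(u,v) \sim \mathrm{Bin}(d', 1/2)$ with mean $d'/2$. Since $0.48 d'$ lies a constant factor below the mean, a standard Chernoff bound yields $\Pr[\ell_1(u,v) \le 0.48 d'] \le e^{-c d'}$ for an absolute constant $c > 0$ (concretely, $c = (0.04)^2/4$ works, from $\Pr[X \le (1-\delta)\mu] \le e^{-\delta^2\mu/2}$ with $\delta = 0.04$ and $\mu = d'/2$).

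Finally I would union-bound. The number of update operations in epoch $i$ is $n_i = O(n)$, hence there are at most $\binom{n_i}{2} = O(n^2)$ unordered pairs, and so $\Pr[\cE] \le O(n^2)\, e^{-c d'}$. Taking $d' = \Omega(\log n)$ with a sufficiently large hidden constant (say $d' \ge 3\ln n / c$) makes this at most $1/n$, proving the lemma. I do not expect a genuine obstacle here: the content of the argument is entirely in noticing that the triangle inequality collapses the statement to one about pairwise distances of i.i.d.\ uniform points, after which the tail estimate and union bound are standard.
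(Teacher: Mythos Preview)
Your proposal is correct and follows essentially the same approach as the paper: reduce via the triangle inequality to a pairwise distance event, apply a Chernoff bound to the binomial $\ell_1(u,v)\sim \mathrm{Bin}(d',1/2)$, and union-bound over the $O(n^2)$ pairs. The paper orders the steps slightly differently (Chernoff and union bound first, then triangle inequality) and uses the threshold $0.49d'$ rather than $0.48d'$, but the argument is the same.
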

\begin{proof}
Note both $u$ and $v$ are chosen uniformly at random from a
$d'$-dimensional boolean hypercube. As a result, we know that $\E[\ell_1(u, v)] = 0.5d'$.
We apply Chernoff Bounds over the coordinates of $u$ and $v$ to get
that $\Pr[\ell_1(u, v) \ge 0.49d'] \le 1/n^3$ for sufficiently large
$d' = \Theta(d) = \Omega(\log n)$. Next, we apply a Union Bound over all
${n \choose 2} \le n^2$ pairs of points in $\bP$. As a result,
the probability of the existence of two points $u$ and $v$
whose distance is at most $0.49d'$ is at most $1/n$.

Suppose there exists a query $q$ such that $\ell_1(u, q) \le 0.24d'$
and $\ell_1(v, q) \le 0.24d'$. By the triangle inequality, we know that
$\ell_1(u, v) \le 0.48d' < 0.49d'$. This only occurs with probability at most
$1/n$.
\end{proof}

Additionally, we also want that queries cover large portions of the boolean
hypercube cube such that they must report a point if it lands
in these large subspaces of the boolean hypercube.
We quantify this by considering the {\em neighborhood} of subsets of queries
over the boolean hypercube. For any query $q$, we consider
its neighborhood to be all points in the boolean hypercube that are
distance at most $r$ from $q$. For subsets of queries within
an epoch's assigned subcube denoted by $Q \subseteq \{0,1\}^{d'}$,
we consider the neighborhood of $Q$ to be any points within distance
$r$ of any query $q \in Q$. We denote the neighborhood of $Q$
by $\Gamma_r(Q)$. We will use the following
standard isoperimetric inequality describing the size of neighborhoods
over any $d'$-dimensional boolean hypercube which follows directly from
Harper's theorem~\cite{FF81}.

\begin{lemma}
\label{lem:exp}
Let $H$ be all the vertices of a $d$-dimensional boolean hypercube.
Let $V$ be a subset of vertices in $H$ such that
$|V| \le 1/(2a^{\epsilon^2 d}) \cdot |H|$ and let $\Gamma_{\epsilon d}(V)$ be the set of all vertices that are distance
at most $\epsilon d$ from any of the vertices in $V$. Then, there exists some
constant $a > 1$ such that
$$
|\Gamma_{\epsilon d}(V) | \ge a^{\epsilon^2 d} \cdot |V|.
$$
\end{lemma}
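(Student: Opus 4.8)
The plan is to invoke Harper's vertex-isoperimetric inequality on the cube to reduce the statement to the case that $V$ is a Hamming ball, and then estimate sums of binomial coefficients. By Harper's theorem, among all subsets of a given cardinality the set minimizing the size of the $t$-neighborhood is an initial segment of the simplicial order, which is sandwiched between two consecutive Hamming balls: $B(0,\rho)\subseteq\mathcal H\subseteq B(0,\rho+1)$, where $\rho$ is the largest radius with $|B(0,\rho)|\le|V|$. Since the $\lfloor\epsilon d\rfloor$-neighborhood of a ball is again a ball, $\Gamma_{\epsilon d}(B(0,\rho))=B(0,\rho+\lfloor\epsilon d\rfloor)$, so $|\Gamma_{\epsilon d}(V)|\ge |B(0,\rho+\lfloor\epsilon d\rfloor)|$ while $|V|\le|B(0,\rho+1)|$. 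It therefore suffices to show that whenever $|B(0,\rho+1)|\le\frac{1}{2a^{\epsilon^2 d}}\,2^{d}$ we have $|B(0,\rho+\lfloor\epsilon d\rfloor)|\ge a^{\epsilon^2 d}\,|B(0,\rho+1)|$; I will verify this with $a=2$ (in fact any constant below $e^{2}$ works), for $d$ large in terms of $\epsilon$, which is the only regime in which the lemma is invoked.

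Next I would split on where $\rho+\lfloor\epsilon d\rfloor$ lies relative to the median $\lceil d/2\rceil$ of $\mathrm{Bin}(d,1/2)$. If $\rho+\lfloor\epsilon d\rfloor\ge\lceil d/2\rceil$, then $|B(0,\rho+\lfloor\epsilon d\rfloor)|\ge\sum_{i\le\lceil d/2\rceil}\binom{d}{i}\ge 2^{d-1}$; combined with the hypothesis $|V|\le 2^{d-1}/2^{\epsilon^2 d}$ this immediately gives $|\Gamma_{\epsilon d}(V)|\ge 2^{d-1}\ge 2^{\epsilon^2 d}|V|=a^{\epsilon^2 d}|V|$, with nothing further to do. This case is exactly where the cardinality bound on $V$, with the same constant $a$ as in the conclusion, gets used.

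The substantive case is $\rho+\lfloor\epsilon d\rfloor<\lceil d/2\rceil$, so the binomials $\binom{d}{\rho+1},\dots,\binom{d}{\rho+\lfloor\epsilon d\rfloor}$ all lie in the increasing part of the profile. Writing $\alpha:=\rho/d$, I would bound $|B(0,\rho+\lfloor\epsilon d\rfloor)|\ge\binom{d}{\rho+\lfloor\epsilon d\rfloor}$ from below and $|B(0,\rho+1)|\le(d+1)\binom{d}{\rho+1}$ from above, then use $2^{H_2(\beta)d}/(d+1)\le\binom{d}{\beta d}\le 2^{H_2(\beta)d}$ (for $\beta\le\tfrac12$) to reduce everything to lower bounding $H_2(\alpha+\epsilon)-H_2(\alpha)$ when $\alpha+\epsilon\le\tfrac12$. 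The crux is the uniform estimate
\[
H_2(\alpha+\epsilon)-H_2(\alpha)=\frac{1}{\ln 2}\int_{\alpha}^{\alpha+\epsilon}\ln\frac{1-u}{u}\,du\ \ge\ \frac{1}{\ln 2}\int_{\alpha}^{\alpha+\epsilon}2(1-2u)\,du=\frac{2\epsilon(1-2\alpha-\epsilon)}{\ln 2}\ \ge\ \frac{2\epsilon^{2}}{\ln 2},
\]
where the first inequality is $\ln\frac{1-u}{u}=2\,\mathrm{artanh}(1-2u)\ge 2(1-2u)$ and the last uses $\alpha+\epsilon\le\tfrac12$. Hence $|B(0,\rho+\lfloor\epsilon d\rfloor)|/|B(0,\rho+1)|\ge 2^{(2/\ln 2)\epsilon^{2}d}/\poly(d)$, which exceeds $2^{\epsilon^{2}d}=a^{\epsilon^{2}d}$ as soon as $\epsilon^{2}d=\Omega(\log d)$, i.e. for all sufficiently large $d$ (with $\epsilon$ a constant), completing the proof.

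The step I expect to be the main obstacle is precisely this uniform entropy-gap bound near the equator. When $\alpha$ approaches $\tfrac12$ the function $H_2$ is nearly flat, so the naive estimate $H_2(\alpha+\epsilon)-H_2(\alpha)\ge\epsilon\,H_2'(\alpha+\epsilon)$ degrades to $O(\epsilon\eta)$ with $\eta=\tfrac12-\alpha-\epsilon\to 0$, which is too weak to beat $a^{\epsilon^{2}d}$. The fix is to integrate $H_2'(u)=\frac{1}{\ln 2}\ln\frac{1-u}{u}$ against the sharp linear lower bound $2(1-2u)$ (equivalently $\mathrm{artanh}(x)\ge x$), which retains a $\Theta(\epsilon^{2})$ contribution even at $\eta=0$; one then only has to check that the $\poly(d)$ slack accumulated in the $\binom{d}{\cdot}$-to-$2^{H_2(\cdot)d}$ conversions and in replacing $\epsilon d$ by $\lfloor\epsilon d\rfloor$ (and $\rho$ by $\rho+1$) is negligible in the target regime $\epsilon=\Theta(1)$, $d=d'=\Omega(\log n)$ — a routine check that ultimately pins down the admissible range of $a$ and the required lower bound on $d$.
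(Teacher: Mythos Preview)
Your proposal is correct and follows exactly the approach the paper intends: the paper does not give a proof at all, stating only that the lemma ``follows directly from Harper's theorem~\cite{FF81}.'' Your argument --- reducing via Harper's vertex-isoperimetric inequality to Hamming balls, splitting on whether the enlarged ball crosses the equator, and in the sub-equatorial case using the entropy bound $H_2(\alpha+\epsilon)-H_2(\alpha)\ge \tfrac{2}{\ln 2}\epsilon^2$ via $\mathrm{artanh}(x)\ge x$ --- is precisely the standard fleshing-out of that citation, and the caveat that it holds for $d$ large in terms of $\epsilon$ (absorbing the $\poly(d)$ slack) matches the paper's usage where $d'=\Omega(\log n)$ and $\epsilon$ is a fixed constant.
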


For convenience, we denote $\Phi := \Phi(r) := a^{\epsilon^2 d'}$ as the {\em expansion}
over each of the $d'$-dimensional boolean hypercubes for distances
of $r := \epsilon \cdot d'$ where $0 < \epsilon < 1$ is a constant.
The above lemmata will end up being important later when we prove our lower bounds.


\paragraph{Choosing parameters.}
We now choose the parameters for our problem. First, we want
to ensure that if a query $q$ may report any point, that point will be unique
with high probability. We can ensure this property by picking $cr \le 0.24d'$ and 
applying Lemma~\ref{lem:dis}. To ensure large expansion within each epoch's subcube,
we will set $r = \Theta(d')$. As an example, we can choose parameters such as
$r = 0.01d'$ and $1 \le c \le 24$ to get our desired properties.



\subsection{Overview of Our Proof}

Before we begin formally proving our lower bound, we present a high level overview showing
the steps of our approach. Our techniques will follow the techniques first outlined by
Larsen~\cite{Lar12}, which combine the chronogram introduced by
Fredman and Saks~\cite{FS89} and the cell sampling method introduced
by Panigrahy {\em et al.}~\cite{PTW10}.
We fix $t_u$ to be the worst case update time and our goal is to prove a lower bound on the expected query time $t_q$.

For the sequence of $\Theta(n)$ randomly chosen
$\uop$ operations $\bU$, we denote $C(\bU)$ as the random variable of the set of
all the cells stored by the data structure
after processing all $\uop$ operations of $\bU$.
We partition the cells of $C(\bU)$ into $k$ groups depending on the most recent
operation that updated the contents of the cell.
In particular, we denote $C_i(\bU)$ as the random variable
describing the set of cells in $C(\bU)$ whose contents were last updated by an $\uop$ operation performed during epoch $i$. For any query point $q \in Q$, we denote $t_i(\bU, q)$ as the
random variable denoting the number of cells that are probed by the query algorithm
on input $q$ that belong to the set $C_i(\bU)$. For any set of queries $Q' \subseteq Q$, we denote
the random variable $t_i(\bU, \bq)$ as the total number of cells probed from $C_i(\bU)$ when
processing a $\qop$ operation where the input $\bq$ is chosen uniformly at random
from $Q'$.

The first step of our proof will be to focus on individual epochs.

\begin{lemma}
\label{lem:weak_epoch}
Fix the random string $\cR$.
If $\beta = (wt_u)^2$, then for all epochs $i \in \{0,\ldots,k-1\}$,
$$\Pr\left[t_i(\bU, \bq_i) = \Omega\left(\frac{d'}{\log(t_u w)}\right)\right] \ge 1/2$$
where $\bq_i$ is chosen uniformly at random from $P_i$.
\end{lemma}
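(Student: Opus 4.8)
The plan is to prove the contrapositive by running a one-round cell-sampling argument \emph{inside} epoch $i$, in the style of Larsen~\cite{Lar12,Lar12b} and Panigrahy--Talwar--Wieder~\cite{PTW10}, where statistical obliviousness is used to keep under control the information that the \emph{addresses} of the epoch-$i$ cells carry about $\bU_i$. Fix $\cR$ and suppose, for contradiction, that with probability more than $1/2$ over $(\bU,\bq_i)$ the query probes fewer than $\tau := c_0\, d'/\log(t_uw)$ cells of $C_i(\bU)$, for a small constant $c_0>0$ to be chosen. By averaging we fix the updates $\bU_j$ of all epochs $j\neq i$ so that, over a random $\bU_i$ and a random $q\in P_i$, this happens with probability more than $1/2$; a further Markov averaging gives that a constant fraction of choices of $\bU_i$ are ``good'', meaning that over a random $q\in P_i$ it happens with probability at least $1/4$ --- call such $q$ \emph{light}.

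The first step is the chronogram ``clearing''. Let $C^*$ be the set of cells whose final contents were written during epochs $i-1,\dots,0$. Since those epochs together contain fewer than $n_i/(\beta-1)$ updates and $\beta=(wt_u)^2$, we have $|C^*|\le t_un_i/(\beta-1)$, so $C^*$ (addresses and contents) is describable in $O(n_i/(wt_u))=o(n_id')$ bits. Now apply one-round cell sampling: for each good $\bU_i$ let $T_i$ be a uniformly random subset of $C_i(\bU)$ of size $\min(|C_i(\bU)|,\, p\,t_un_i)$ with $p:=(t_uw)^{-c_1}$ for a suitable constant $c_1\ge 2$, and let $Q_i(T_i)\subseteq P_i$ be the set of light queries all of whose probes into $C_i(\bU)$ land in $T_i$. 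A light query probing $j<\tau$ cells of $C_i(\bU)$ is captured with probability at least $p^j\ge p^\tau=2^{-c_1c_0d'}$, so $\E_{T_i}|Q_i(T_i)|\ge p^\tau\cdot\tfrac14|P_i|$; choosing $c_0$ small enough that $c_1c_0<\epsilon^2\log a$ (with $r=\epsilon d'$ and $a$ the constant of Lemma~\ref{lem:exp}), this is at least $2^{d'}/(2\Phi)$ for $d'$ large, and we fix $T_i$ attaining it; note $|T_i|\le p\,t_un_i$, so $T_i$ is describable in $O((t_uw)^{1-c_1}n_i)=o(n_id')$ bits.

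The heart of the argument is to show that $Q_i(T_i)$ covers a constant fraction of the points inserted in epoch $i$, each recoverable from the data structure's answers. First, $Q_i(T_i)$ is a deterministic function of $\big(C^*,\,C_i^{\addr}(\bU),\,T_i\big)$ together with the fixed data: to test $q\in Q_i(T_i)$ one simulates the query, resolving each probe from $T_i$ (if its address lies in $C_i^{\addr}$ and in $T_i$), from $C^*$, or by recomputing the cell from epochs $>i$ (if its address lies in neither), and declaring $q\notin Q_i(T_i)$ as soon as a probe hits $C_i^{\addr}\setminus T_i$ or the $\tau$-th cell of $C_i^{\addr}$. Now $C^*$ and $T_i$ carry only $o(n_i)$ bits each and --- crucially --- $\delta$-obliviousness together with a ``reverse Pinsker'' estimate shows $I(\bU_i;C_i^{\addr}(\bU))=o(n_i)$: assuming w.l.o.g.\ that every probe is a write, $C_i^{\addr}(\bU)$ is a function of the address pattern during epochs $\le i$, whose law moves by at most $\delta\le 1/\log^2 n$ in statistical distance when $\bU_i$ is replaced, which bounds the divergence (hence the mutual information) by $\delta\cdot(\text{description length of }C_i^{\addr})=\delta\, t_un_i w=o(n_i)$ when $t_uw=o(\log^2 n)$ --- the regime in which the theorem improves on prior bounds. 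By data processing, $I(\bU_i;Q_i(T_i))=o(n_i)$, so for all but $o(n_i)$ of the coordinates $u_\ell$ of $\bU_i$ the law of $u_\ell$ given $Q_i(T_i)$ is $o(1)$-close to uniform on $P_i$; since $\Gamma_r(Q_i(T_i))$ contains at least half of $P_i$ for good $\bU_i$ by Lemma~\ref{lem:exp}, this yields $\E\big[|\bU_i\cap\Gamma_r(Q_i(T_i))|\big]=\Omega(n_i)$. For each covered point $u$ there is a resolved $q$ with $\ell_1(q,u)\le r$; by $cr\le 0.24d'$ and Lemma~\ref{lem:dis}, $u$ is the unique point of $\bU_i$ within distance $cr$ of $q$, so $\ds$'s answer to $q$ --- recomputable from $(C^*,C_i^{\addr},T_i)$ and the fixed data --- equals $u$. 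Hence, encoding $\bU_i$ by $\big(C^*,\,T_i,\,\text{the subset of its points outside }\Gamma_r(Q_i(T_i))\big)$ and charging the address set $C_i^{\addr}$ against $I(\bU_i;C_i^{\addr})$, the resulting description of the set of points inserted during epoch $i$ has expected length at most $o(n_id')+(1-\Omega(1))n_id'$, strictly below its entropy $(1-o(1))n_id'$ (collisions among the $n_i$ uniform insertions being unlikely) --- a contradiction. Therefore $t_i(\bU,\bq_i)\ge\tau=\Omega(d'/\log(t_uw))$ with probability at least $1/2$.

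I expect the main obstacle to be the third step, specifically: (i) the ``reverse Pinsker'' step, i.e.\ quantitatively turning $\delta$-obliviousness into $I(\bU_i;C_i^{\addr})=o(n_i)$ over the relevant range of $t_uw$, and reconciling it with the lemma being stated for a fixed $\cR$ (which I would handle by proving the statement in expectation over $\cR$ and passing to a constant fraction of $\cR$ by Markov); and (ii) converting $I(\bU_i;Q_i(T_i))=o(n_i)$ into the \emph{coverage} statement ``a constant fraction of the $n_i$ insertions lie in $\Gamma_r(Q_i(T_i))$'' despite the genuine dependence of $Q_i(T_i)$ on $\bU_i$ and the conditioning on the ``good'' event --- this is the analogue of the static ``independence of resolved queries'' difficulty and the reason one needs a fresh information-theoretic proof of the $\ANN$ cell-sampling bound. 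The remaining steps (the chronogram bookkeeping, balancing the cell-sampling survival probability $p^\tau$ against the expansion factor $\Phi$, and the packing/uniqueness argument via Lemma~\ref{lem:dis}) are routine.
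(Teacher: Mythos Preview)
Your plan is correct and matches the paper's proof almost step for step: chronogram bookkeeping, one-round cell sampling inside epoch $i$ to obtain a small $T_i$ resolving $\ge 2^{d'-1}/\Phi$ queries, the key observation that $Q_i(T_i)$ is determined by $(T_i,\,C^*,\,C_i^{\addr}(\bU))$ together with the fixed data, a reverse-Pinsker bound converting $\delta$-obliviousness into small $I(\bU_i;C_i^{\addr}(\bU))$, and an encoding contradiction against $H(\bU_i\mid C_i^{\addr}(\bU),\cR,u_{-i})$.

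On the two obstacles you single out, the paper resolves them slightly differently than you propose, and both of its choices are a bit cleaner. For (i), the paper applies the reverse-Pinsker inequality in the \emph{other} direction --- to the conditional law of $\bU_i$ given the view rather than to the law of $C_i^{\addr}$ given $\bU_i$ --- obtaining $I(\bU_i;C_i^{\addr})=O(\delta\cdot n_i d')$ instead of your $O(\delta\cdot t_u n_i w)$; this removes your side condition $t_u w=o(\log^2 n)$. For (ii), the paper sidesteps the per-coordinate near-independence route entirely and proves the coverage statement by a second, self-contained encoding argument (their Lemma~\ref{lem:dynamic_extract}): assuming fewer than $n_i/8$ points of $\bU_i$ lie in $\Gamma_r(Q_i(T_i))$, each uncovered point can be encoded in at most $d'-1$ bits (since $|\Gamma_r(Q_i(T_i))|\ge 2^{d'-1}$), which together with $T_i$, $C^*$ and client storage gives an encoding shorter than $H(\bU_i\mid C_i^{\addr},\cR,u_{-i})=n_i(d'-o(1))$. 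This avoids the conditioning-on-``good'' subtlety you anticipated. Your mutual-information route for (ii) also works with the extra care you note, so this is a difference of presentation rather than substance.
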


The proof of this lemma will use the cell sampling technique introduced by
Panigrahy {\em et al.}~\cite{PTW10} for static (non-oblivious) $\ANN$ lower bounds.
Their main idea is to, first,
assume the existence of an extremely efficient static
data structure that probes a small number of cells in expectation.
Next, they the show the existence of
a small subset of cells that {\em resolve} a very large
subset of possible queries where a query is resolved by a subset
of cells if the query does not probe any cells outside of the subset.
Afterwards, they show that the answers of the resolved queries
 reveal more bits of information about the input than the maximal amount
of information that can be stored about input in the subset of sampled cells.
This results in a contradiction showing there cannot exist
such an efficient static data structure that was original assumed.

However, to show that a lot of information is revealed by resolved queries,
the work of~\cite{PTW10} used several complex
combinatorial techniques.
These complex techniques end up being hard to scale
for the dynamic setting. To prove our dynamic lower bound, we first present
new ideas that simplify the static (non-oblivious) $\ANN$ proof using information theoretic arguments.
At a high level, we show that the set of resolved queries are a deterministic function of the sampled cells
which contain very little information about the inputs. This suffices to prove that
the resolved query set and inputs are almost independent. Since the input points are chosen
uniformly at random, it turns out that resolved queries will return a large number of input
points with constant probability, which would allow us to forego the complex techniques that
appear in~\cite{PTW10}.

Unfortunately, it turns out significantly larger problems appear when
moving to the dynamic setting even when using our simplifications.
Towards a contradiction, assume that
there exists an efficient data structure that probes $o(d'/\log(t_u w))$
cells from the set $C_i(\bU)$ in expectation. We can apply the cell
sampling technique to find a small subset $T_i \subset C_i(\bU)$ that resolves
a large number of queries. In this case, a query $q$ is resolved by $T_i$
if all cells that are probed by $q$ in the set $C_i(\bU)$ all belong to $T_i$.
Note, we do not project any restrictions on the cells probed by $q$ outside
the set $C_i(\bU)$. Once again, denote the set of queries resolved by $T_i$
using $Q_i(T_i)$. Using our information theoretic ideas, we want to show that
$Q_i(T_i)$ can be computed using only the little information stored
in the set of sampled cells $T_i$ and the client storage $M(\bU)$ as well as the
random string $\cR$.
In the dynamic case, the set of queries $Q_i(T_i)$ cannot be computed
using only $T_i$, $M(\bU)$ and $\cR$. As an example, consider a query
$q \in P_i$. During the execution of $q$, consider the first time a probe
is performed outside the set $T_i$. There is no way to determine
whether the probed cell exists in
$C_i(\bU)$ or not using only
the information in $T_i$, $M(\bU)$ and $\cR$.
As a result, it is impossible to accurately compute the set of
resolved queries $Q_i(T_i)$.

To get around this, we can attempt to also use $C_i(\bU)$ to compute
$Q_i(T_i)$. However, the set $C_i(\bU)$ is very large and may potentially
contain significantly
more information about $\bU_i$ compared to
the set of sampled cells $T_i$ and client storage $M(\bU)$.
As a result, we would not be able to prove our contradiction.
Instead, it turns out that computing $Q_i(T_i)$ only requires knowledge
of the addresses of $C_i(\bU)$. By the guarantees of statistical obliviousness,
we know that the addresses of $C_i(\bU)$ may not reveal too much
information about the underlying update operations $\bU$.
As a result, we can show that even though $C_i(\bU)$ is expressed using
many bits, that most of the bits cannot contain information about $\bU_i$.

One more issue that arises is that the above lemma is similar yet
crucially different than those used in lower bounds
for non-oblivious data structures.
In the standard application of the chronogram technique, 
the analogue of this lemma typically asserts that a single random query
$\bq_i$ must be \emph{simultaneously} 
``sensitive" to most epochs. That is, $\bq_i$
forces a large number of probes from cells in $C_i(\bU)$
for many (essentially all) epochs $i$ simultaneously.
Instead, our lemma says that for the all epochs,
there exists a special query distribution $\bq_i$ drawn uniformly at random from
$P_i$ built specially for that epoch $i$ that forces many probes
to cells in $C_i(\bU)$.
It turns out that this weaker lemma suffices for oblivious data structures.
We are able to use the fact that obliviousness must
hide the input query point from any adversary.
The main idea is that the adversary knows there exists some query point
from the set $P_i$ that must probe $\Omega(d'/\log(t_uw))$ cells from $C_i(\bU)$
to correctly answer the query. If the adversary views a query that probes significantly less cells
from $C_i(\bU)$, it can effectively deduce that the query does not come from the query set $P_i$ for
otherwise the answer of the query could not be correct. This observation by the adversary would
contradict obliviousness. As a result, we can essentially boost Lemma~\ref{lem:weak_epoch} into
the stronger variant below.

\begin{lemma}
\label{lem:strong_epoch}
If $\beta = (wt_u)^2$, there
exists a fixed query q such that
$$\E[t_i(\bU, q)] = \Omega\left(\frac{d'}{\log(t_uw)}\right)$$
for all epochs $i \in \{0,\ldots,k-1\}$.
\end{lemma}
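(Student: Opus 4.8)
The plan is to derive Lemma~\ref{lem:strong_epoch} from the ``weak'' epoch lemma (Lemma~\ref{lem:weak_epoch}) by a standard averaging-plus-obliviousness argument, exactly along the lines sketched in the technical overview. Fix an epoch $i$. By Lemma~\ref{lem:weak_epoch}, for every fixed $\cR$ we have $\Pr_{\bU,\bq_i}[t_i(\bU,\bq_i) \ge \alpha d'/\log(t_uw)] \ge 1/2$ where $\bq_i$ is uniform over $P_i$ and $\alpha>0$ is the hidden constant; averaging over $\cR$ as well, the same bound holds with probability taken over $\cR,\bU$ and $\bq_i\sim P_i$. In particular $\E_{\cR,\bU,\bq_i}[t_i(\bU,\bq_i)] \ge \tfrac12\cdot \alpha d'/\log(t_uw) = \Omega(d'/\log(t_uw))$. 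By linearity this means there is \emph{some fixed} point $q^{(i)}\in P_i$ with $\E_{\cR,\bU}[t_i(\bU,q^{(i)})] = \Omega(d'/\log(t_uw))$ --- but $q^{(i)}$ depends on the epoch, whereas the lemma demands one query $q$ that works for \emph{all} epochs simultaneously. This is exactly where obliviousness is used to transfer the bound for $q^{(i)}$ to the fixed ``outside'' query $q$ from the hard distribution.

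The key step is the obliviousness transfer. Let $q$ be the fixed query point lying outside all the subcubes $P_0,\dots,P_{k-1}$ (the hard query distribution $\bq$). Consider the two operation sequences $O_1 = (\bU, q)$ and $O_2 = (\bU, q^{(i)})$; these have equal length, so $\delta$-statistical obliviousness gives $|\view_\ds(O_1) - \view_\ds(O_2)| \le \delta$, and since the first $n$ operations agree, this is really a statement about the last operation: the distribution of the access pattern of the query step is within statistical distance $\delta$ whether we query $q$ or $q^{(i)}$. Now the quantity $t_i(\bU,\cdot)$ --- the number of probed cells whose last writer was in epoch $i$ --- is \emph{not} a function of the query's access pattern alone, because determining membership in $C_i(\bU)$ requires knowing which cells were last written during epoch $i$; so the transfer has to be done carefully. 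The clean way is to condition on the full history: fix $\cR$ and $\bU$, which fixes $C_i(\bU)$ (in particular its address set) as a deterministic object. Then $t_i(\bU,\cdot)$ \emph{is} a deterministic function of the query's access pattern together with this fixed address set, so the access-pattern closeness from obliviousness (applied conditioned on this fixed prefix, using that the adversary view decomposes per-operation) yields $|\E[t_i(\bU,q)] - \E[t_i(\bU,q^{(i)})]| \le \delta \cdot (\text{max possible value of } t_i) \le \delta\, t_q$, or more simply $\le \delta \cdot t_u w / w \cdot(\ldots)$; one bounds the worst-case value of $t_i(\bU,q)$ by the worst-case query time, which we may assume is at most (say) $d' \cdot \mathrm{poly}(n)$ without loss of generality, and with $\delta \le 1/\log^2 n$ this additive slack is $o(d'/\log(t_uw))$. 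Hence $\E_{\cR,\bU}[t_i(\bU,q)] \ge \E_{\cR,\bU}[t_i(\bU,q^{(i)})] - o(d'/\log(t_uw)) = \Omega(d'/\log(t_uw))$ for this same fixed $q$, and since $q$ did not depend on $i$, this holds for every epoch $i\in\{0,\dots,k-1\}$, proving the lemma.

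The main obstacle, and the point requiring the most care, is making the obliviousness transfer quantitatively sound: statistical distance $\delta$ between two distributions only bounds the difference of expectations of a functional if that functional is bounded, so I need a clean a priori bound on $t_i(\bU,q)$ that makes $\delta$ times that bound negligible compared to $d'/\log(t_uw)$. If the worst-case query time can be super-polynomial, one first observes that any data structure with $t_q$ super-polynomial is already handled trivially (the claimed lower bound is at most polylogarithmic), so we may assume $t_q \le \mathrm{poly}(n)$, and then $\delta \le 1/\log^2 n$ kills the error term. A secondary subtlety is that $t_i(\bU,\cdot)$ is genuinely a function of the \emph{access pattern} plus the fixed epoch-$i$ cell-address set, not of the access pattern alone --- but since we condition on $(\cR,\bU)$ before invoking obliviousness, and the adversary's view decomposes as $(\view_\ds(\op_1),\dots,\view_\ds(\op_n),\view_\ds(\text{query}))$ with the first $n$ components identical for $O_1$ and $O_2$, the conditional distributions of the query access pattern are still within statistical distance $\delta$ (obliviousness for the unconditioned sequences implies it for the conditioned last-step distributions, up to the same $\delta$), so the argument goes through. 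Everything else --- the averaging over $\bq_i\sim P_i$ and over $\cR$, and extracting a fixed $q^{(i)}$ --- is routine linearity of expectation.
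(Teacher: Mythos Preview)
Your overall strategy matches the paper's: use obliviousness to transfer the per-epoch lower bound from an epoch-specific query (or query distribution) in $P_i$ to the single fixed query $q$ lying outside all the subcubes. The secondary subtlety you flag --- that $t_i(\bU,\cdot)$ depends on the address set $C_i^{\addr}(\bU)$, not just the query's own access pattern --- is handled correctly: the adversary sees the entire access-pattern history and can compute $C_i(\bU)$ from it, so $t_i$ is indeed a function of the full view.

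The genuine gap is in the quantitative obliviousness transfer. You bound $|\E[t_i(\bU,q)] - \E[t_i(\bU,q^{(i)})]|$ by $\delta \cdot \|t_i\|_\infty$ and then claim that assuming $t_q \le \mathrm{poly}(n)$ together with $\delta \le 1/\log^2 n$ makes this $o(d'/\log(t_uw))$. This is false: $\mathrm{poly}(n)/\log^2 n$ is still polynomial in $n$, whereas $d'/\log(t_uw) = O(\log n)$ in the regime of interest. Even assuming $t_q$ is polylogarithmic does not obviously suffice (e.g.\ $t_q = \log^3 n$ gives $\delta\cdot t_q = \log n$, which is not $o(\log n/\log\log n)$). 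The expectation-transfer argument simply cannot be made to work with only a $1/\log^2 n$ statistical-distance guarantee unless you already assume $t_q$ is at most the final target bound --- which would prove the theorem but not the lemma as stated.

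The paper sidesteps this entirely by comparing \emph{probabilities of a bounded event} rather than expectations of an unbounded random variable. From Lemma~\ref{lem:weak_epoch}, $\Pr[t_i(\bU,\bq_i) \ge \gamma\, d'/\log(t_uw)] \ge 1/2$. Suppose for contradiction that $\E[t_i(\bU,q)] < (\gamma/4)\, d'/\log(t_uw)$; Markov's inequality gives $\Pr[t_i(\bU,q) \ge \gamma\, d'/\log(t_uw)] \le 1/4$. The distinguisher that outputs $1$ iff the query probes at least $\gamma\, d'/\log(t_uw)$ cells of $C_i(\bU)$ is a function of the view, so obliviousness forces these two probabilities to differ by at most $\delta \le 1/\log^2 n$, contradicting the gap of $1/4$. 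This yields $\E[t_i(\bU,q)] \ge (\gamma/4)\, d'/\log(t_uw)$ with no a priori bound on $t_q$ needed. Replacing your expectation transfer with this threshold-plus-Markov step closes the gap.
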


The above lemma resembles the form of lemmata typically used in non-oblivious data structure lower bounds. We now show Lemma~\ref{lem:strong_epoch}
suffices to complete the lower bound by proving Theorem~\ref{thm:main}.

\begin{proof}[Proof of Theorem~\ref{thm:main}]
Note that sets of cells
$C_0(\bU), \ldots, C_{k-1}(\bU)$ are all disjoint and the random variable
$t_i(\bU, q)$ only counts the number of cells that are probed from $C_i(\bU)$.
Therefore, the total number of cells probed by
$t(\bU, q) = t_0(\bU, q) + \ldots + t_{k-1}(\bU, q)$.
There are $k = \Theta(\log_{\beta} (n/m))$ epochs.
Using linearity of expectation and Lemma~\ref{lem:strong_epoch}, it can be shown that
$$\E[t(\bU, q)] = \Omega(d' \log(n/m) / (\log(t_u w))^2).$$
The proof is completed by noting that $d = \Theta(d')$.
\end{proof}


\subsection{Bounding Cell Probes to Individual Epochs}

Towards
a contradiction, assume an extremely efficient data structure with
$t_i(\bU, \bq_i) = o(d'/\log(t_uw))$ where $\bq_i$ is drawn
uniformly at random from $P_i$. We apply the
cell sampling technique such that a small subset of cells $T_i \subset C_i(\bU)$
resolves a large number of queries $Q(T_i) \subseteq P_i$. 

\subsubsection{Cell Sampling}
\begin{lemma}
\label{lem:dynamic_sample}
Fix the random string $\cR$.
Suppose that $t_i(\bU, \bq_i) = o(\log\Phi/\log(t_u w)) = o(d'/\log(t_u w))$
where $\bq_i$ is drawn uniformly at random from $P_i$.
Then, there exists a subset of cells $T_i \subseteq C_i(\bU)$ with the
following properties:
\begin{itemize}
\item $|T_i| = \frac{n_i}{100w}$;
\item Let $Q_i(T_i)$ be all queries resolved by $T_i$ and probe
at most $2t_i(\bU, \bq_i)$ cells in $C_i(\bU)$.
Recall a query $q \in Q_i(T_i)$ is resolved by $T_i$ if every cell in $C_i(\bU)$
that is probed
when executing $q$ must exist in the subset $T_i$.
Then, $|Q_i(T_i)| \ge 2^{d'-1}/\Phi$.
\end{itemize}
\end{lemma}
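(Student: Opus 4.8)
The plan is to run the static cell-sampling argument of Panigrahy et al.~\cite{PTW10}, specialized to the single epoch $i$ and the query universe $P_i$, with $\bU$ and $\cR$ held fixed so that $C_i(\bU)$ is a fixed set of cells. Write $t^\ast := \E_{\bq_i}[t_i(\bU,\bq_i)]$ for the expected number of $C_i(\bU)$-probes made by a query drawn uniformly from $P_i$, so $t^\ast = o(\log\Phi/\log(t_u w))$ by hypothesis. First I would pass, via Markov's inequality, to the set $G := \{q \in P_i : t_i(\bU,q) \le 2t^\ast\}$ of ``good'' queries, which has $|G| \ge |P_i|/2 = 2^{d'-1}$; every good query probes at most $2t^\ast$ cells of $C_i(\bU)$, so it lands in $Q_i(T_i)$ as soon as all of those cells lie in $T_i$.

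Next I would bound $|C_i(\bU)| \le n_i t_u$: each of its cells was last written by one of the $n_i$ updates of epoch $i$, and each update probes at most $t_u$ cells. If $|C_i(\bU)| < n_i/(100w)$ the lemma is immediate (take $T_i = C_i(\bU)$, so every query is vacuously resolved and $Q_i(T_i) \supseteq G$); otherwise set $s := n_i/(100w)$ and let $T_i$ be a uniformly random size-$s$ subset of $C_i(\bU)$, so that $|C_i(\bU)|/s \le 100 w t_u$. For a fixed $q \in G$, the probability that all of its (at most $2t^\ast$) cells in $C_i(\bU)$ fall inside $T_i$ is at least $(s/(2|C_i(\bU)|))^{2t^\ast} \ge (200 w t_u)^{-2t^\ast}$ (using $t^\ast \le s/4$, which we may assume, as otherwise $t_q = \Omega(n_i/w)$ already dwarfs the target bound). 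By linearity of expectation, the expected number of good queries resolved by $T_i$ is at least $2^{d'-1}(200 w t_u)^{-2t^\ast} = 2^{d'-1} \cdot 2^{-2t^\ast \log(200 w t_u)}$. The hypothesis is tailored precisely so that $2t^\ast\log(200 w t_u) = o(\log\Phi)$ (since $\log(200 w t_u) = \Theta(\log(t_u w))$), hence this exponent is eventually at most $\log\Phi$ and the expected count is at least $2^{d'-1}/\Phi$. Fixing $T_i$ to attain its expectation yields a size-$(n_i/100w)$ subset of $C_i(\bU)$ with $|Q_i(T_i)| \ge 2^{d'-1}/\Phi$.

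The estimates are routine; the one place needing care is the bookkeeping linking $|C_i(\bU)| \le n_i t_u$, the sample size $|T_i| = n_i/(100w)$, and $t^\ast = o(\log\Phi/\log(t_u w))$ — the argument goes through exactly because the per-probe sampling penalty $\log(|C_i(\bU)|/|T_i|) = O(\log(w t_u))$, incurred $2t^\ast$ times by a good query, stays below $\log\Phi$ thanks to the precise form of the bound on $t^\ast$. I would also dispose of the degenerate regimes first (a tiny $C_i(\bU)$, or $t^\ast$ so large that Theorem~\ref{thm:main} is already trivial) and note that fixing $\cR$ and $\bU$ loses nothing, since both the hypothesis and the conclusion of the lemma can be read conditionally on them.
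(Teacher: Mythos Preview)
Your proposal is correct and follows essentially the same route as the paper: Markov's inequality to extract the set of good queries $G$ with $|G|\ge 2^{d'-1}$, the bound $|C_i(\bU)|\le n_i t_u$, a uniformly random size-$n_i/(100w)$ sample $T_i\subseteq C_i(\bU)$, the per-query resolution probability $\ge (200 t_u w)^{-2t_i}\ge \Phi^{-1}$, and linearity of expectation to conclude. Your handling of the degenerate cases (small $|C_i(\bU)|$, large $t^\ast$) and your explicit fixing of $\bU$ are a bit more careful than the paper's write-up, but the argument is the same.
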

\begin{proof}
For convenience, denote $t_i = t_i(\bU, \bq_i) = o(\log\Phi/\log(t_u w)) = o(d'/\log(t_u w))$.
Since we fixed the random string $\cR$, the randomness
of the data structure is strictly over the choice of updates from
the hard distribution $\bU$ and the random query $\bq_i$.
By Markov's inequality, there exists a subset of queries
$Q_i \subset P_i$ such that each $q \in Q_i$ probes at most
$2t_i$ cells in $C_i(\bU)$ and $Q_i$ contains at least $|P_i|/2 = 2^{d'-1}$
queries.

Consider the following random experiment where a subset
$\bT_i \subseteq C_i(\bU)$ is chosen uniformly at random from
all subsets with exactly $n/(100w)$ cells. Pick any query
$q \in Q_i$ probing at most $2t_i$ cells in $C_i(\bU)$.
We will analyze the probability that $q$ is resolved by $\bT_i$
over the random choice of $\bT_i$.
\begin{align*}
\frac{{|C_i(\bU)| - 2t_i \choose n_i/(100w) - 2t_i}}{{|C_i(\bU)| \choose n_i/(100w)}}
&\ge \frac{n_i/(100w) \cdot (n_i/(100w) - 1) \cdots (n_i/(100w) - 2t_i + 1)}{|C_i(\bU)| \cdot (|C_i(\bU)|-1) \cdots (|C_i(\bU)| - 2t_i + 1)}\\
&\ge \left(\frac{n_i/(100w) - 2t_i}{|C_i(\bU)|}\right)^{2t_i}\\
&\ge \left(\frac{n_i}{200|C_i(\bU)|w}\right)^{2t_i}\\
&\ge \left(\frac{1}{200t_uw}\right)^{2t_i}\\
&\ge \Phi^{-1}.
\end{align*}
The second last inequality uses the fact that $|C_i(\bU)| \le n_i t_u$ while
the last inequality uses the fact that $t_i = o(\log\Phi/\log(t_u w))$.
By linearity of expectation, we know that
$$\E[|Q_i(\bT_i)|] \ge |Q_i| \cdot \Phi^{-1} = 2^{d'-1}/\Phi.$$
As a result, there exists a subset $T_i \subset C_i(\bU)$
satisfying all the required properties.
\end{proof}

\subsubsection{Information from Resolved Queries}

Next, we will show that the resolved queries $Q_i(T_i)$ will report
a large number of points that are inserted by $\bU_i$.
Recall that a point in $\bU_i$ is reported by a query in $Q_i(T_i)$
if and only if it belongs to the neighborhood of $Q_i(T_i)$ denoted
by $\Gamma_r(Q_i(T_i)) \subseteq P_i$. Note, we only consider expansion
within the subcube $P_i$.
For convenience,
we fix $\bU_{-i}$, which consists of all updates outside of epoch $i$.

Towards a contradiction,
we will suppose that most points inserted by $\bU_i$ land
outside of $\Gamma_r(Q_i(T_i))$ and present an impossible
compression of $\bU_i$.
Formally, we construct a one-way encoding protocol from an encoder (Alice)
to a decoder (Bob). Alice receives as input $\bU$ and the random string
$\cR$. Bob will receive the addresses of cells in $C_i(\bU)$ denoted
by $C_i^{\addr}(\bU)$ and the random string $\cR$. The goal of Alice
is to encode the $n_i$ points inserted in $\bU_i$.
By Shannon's source coding theorem, the expected length of Alice's
encoding must be at least $H(\bU_i \mid C_i^{\addr}(\bU), \cR)$,
which we now analyze. 
In particular, we present an argument that the entropy of $\bU_i$ remains high
even conditioned on $C_i^{\addr}(\bU)$ due to statistical
obliviousness guarantees.
However, statistical obliviousness provides guarantees
using statistical distance which is not directly compatible with
our information theoretic arguments.
To do this, we present
the following lemma upper bounds the contributions of the positive
terms to the Kullback-Leibler
divergence between two distributions, in terms of their
statistical distance.
We note a similar lemma previously appeared in~\cite{BRW13}.

\begin{lemma} [Reverse Pinsker]    \label{lemma:reverse_pinsker}
Let $p(a,b)$ and $q(a,b)$ be two distributions over $A\times B$ in the
same probability space, and let
$S = \left\{ (a,b) : \lg \frac{p(a|b)}{q(a|b)} > 1 \right\}$.
Then, $p(S) < 2|p(a,b)-q(a,b)|$.
\end{lemma}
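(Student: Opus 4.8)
The plan is to reduce the statement to a pointwise (per-$b$) inequality about the conditional laws $p(\cdot\mid b)$ and $q(\cdot\mid b)$, and then average over $b$ and relate the result to the distance between the joints. Write $S_b := \{a : (a,b)\in S\} = \{a : p(a\mid b) > 2q(a\mid b)\}$, so that $p(S) = \sum_b p(b)\, p(S_b\mid b)$, where $p(S_b\mid b) := \sum_{a\in S_b} p(a\mid b)$. The one elementary observation driving the whole proof is that on $S_b$ the (strict) inequality $p(a\mid b) > 2q(a\mid b)$ rearranges to $p(a\mid b) < 2\bigl(p(a\mid b) - q(a\mid b)\bigr)$; summing over $a \in S_b$ gives
\[
p(S_b\mid b) \;<\; 2\sum_{a\in S_b}\bigl(p(a\mid b) - q(a\mid b)\bigr).
\]
Since $S_b \subseteq \{a : p(a\mid b)\ge q(a\mid b)\}$, and since $p(\cdot\mid b)$ and $q(\cdot\mid b)$ are both probability distributions (so their positive and negative discrepancies carry equal total mass), the right-hand sum is at most $\tfrac12\lVert p(\cdot\mid b) - q(\cdot\mid b)\rVert_1$. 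Hence $p(S_b\mid b) < \lVert p(\cdot\mid b) - q(\cdot\mid b)\rVert_1$ for every $b$ (the degenerate $b$'s with $q(b)=0$ are handled by the convention that all such pairs lie in $S$ and contribute at most $\sum_{b:q(b)=0}p(b) \le \lVert p_B - q_B\rVert_1$).

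\textbf{Averaging and passing to the joint.} Taking expectation over $b\sim p$ yields
\[
p(S) \;<\; \sum_b p(b)\,\bigl\lVert p(\cdot\mid b) - q(\cdot\mid b)\bigr\rVert_1 \;=\; \sum_{a,b}\bigl|\,p(a,b) - p(b)q(a\mid b)\,\bigr|,
\]
i.e. the $\ell_1$ distance between the true joint and the ``hybrid'' that keeps $p$'s marginal on $b$ but $q$'s conditional. I would then replace the hybrid by $q(a,b)$ via the triangle inequality, $\bigl|p(a,b) - p(b)q(a\mid b)\bigr| \le \bigl|p(a,b) - q(a,b)\bigr| + q(a\mid b)\bigl|p(b) - q(b)\bigr|$; summing the correction term over $a$ collapses $\sum_a q(a\mid b)$ to $1$, leaving $\sum_b |p(b) - q(b)| = \lVert p_B - q_B\rVert_1$, which is at most $\lVert p_{AB} - q_{AB}\rVert_1$ by the (triangle-inequality) data-processing bound for marginalization. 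Combining, $p(S) < \lVert p_{AB} - q_{AB}\rVert_1 + \lVert p_B - q_B\rVert_1 \le 2\lVert p_{AB} - q_{AB}\rVert_1$, which is exactly the claim once $|p(a,b)-q(a,b)|$ is read as the (unnormalized) $\ell_1$ distance between the joint distributions.

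\textbf{Main obstacle.} The inequality itself is short; the only real subtlety is the second step, namely converting the natural quantity ``expected conditional $\ell_1$ distance under $b\sim p$'' into the joint $\ell_1$ distance. The mismatch is precisely that this expectation weights $\lVert p(\cdot\mid b)-q(\cdot\mid b)\rVert_1$ by $p(b)$ rather than by $q(b)$, and absorbing that discrepancy is what introduces the extra $\lVert p_B-q_B\rVert_1$ term and pins down the constant $2$. I would also be careful to fix a normalization convention for statistical distance ($\ell_1$ here, not $\tfrac12\ell_1$) so that the constant in the statement comes out as written; under the $\tfrac12\ell_1$ convention the same argument gives a factor $4$, which is harmless for the downstream parameter choices. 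A closely related inequality appears in~\cite{BRW13}, so I would cite that for context while giving the self-contained argument above.
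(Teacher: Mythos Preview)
Your proposal is correct and arrives at the same constant with the same two ingredients the paper uses: the pointwise rearrangement $p(a\mid b) > 2q(a\mid b) \Leftrightarrow p(a\mid b) < 2(p(a\mid b)-q(a\mid b))$ on $S$, and a triangle-inequality correction for the mismatch between the marginals $p(b)$ and $q(b)$. The packaging, however, is genuinely different.

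The paper does not condition on $b$. It starts from the total-variation identity $p(S)-q(S)\le \tfrac12\lVert p_{AB}-q_{AB}\rVert_1 =: \epsilon/2$, then bounds $q(S)$ directly: on $S$ one has $q(a\mid b) < \tfrac12 p(a\mid b)$, so $q(S) = \sum_{S} q(b)q(a\mid b) < \tfrac12\sum_{S} q(b)p(a\mid b)$, and swapping $q(b)$ for $p(b)$ costs at most $\tfrac12\sum_b|p(b)-q(b)|\le \epsilon/2$. This yields the self-referential inequality $p(S) < \epsilon + \tfrac12 p(S)$, which solves to $p(S) < 2\epsilon$. Your route instead establishes the per-$b$ bound $p(S_b\mid b) < \lVert p(\cdot\mid b)-q(\cdot\mid b)\rVert_1$, averages to land on the hybrid joint $p(b)q(a\mid b)$, and then pays the same $\lVert p_B-q_B\rVert_1$ correction to reach $q(a,b)$. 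Both arguments are two lines long; the paper's recursive trick is slightly slicker, while your conditional-then-average version makes it more transparent why the factor $2$ is exactly the cost of reweighting the $b$-marginal from $p$ to $q$. Your handling of the degenerate $b$'s with $q(b)=0$ is fine (their contribution is absorbed into both the joint and marginal $\ell_1$ terms), and your remark about the normalization convention is well taken: the paper indeed uses $|p-q| = \lVert p-q\rVert_1$, not the halved version.
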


\begin{proof}
Let $\epsilon = | p(a,b) - q(a,b)| := 2 \max_T \{ p(T) - q(T) \} \geq 2(p(S) - q(S))$.
Rearranging sides, we have:

\begin{align*}
p(S) &\leq \epsilon/2 + q(S) \\
& < \epsilon/2 + (1/2) \sum_{(a,b) \in S} q(b)\cdot  p(a|b)\\
& \leq \epsilon/2 + (1/2) \sum_{(a,b) \in S} p(b) \cdot  p(a|b)
+ (1/2) \sum_{(a,b) \in S} |q(b)-p(b)| \cdot  p(a|b)\\
&\leq \epsilon/2 +  p(S)/2 + (1/2) \sum_{(a,b) \in S} |q(b)-p(b)|  \cdot p(a|b) \\
&\leq \epsilon/2 + p(S)/2  + (1/2) \sum_{b} |q(b) - p(b)|\\
& \leq \epsilon + p(S)/2
\end{align*}
where the second inequality follows from the fact for any $(a,b) \in S$,
$q(a|b) \geq p(a|b)/2$ by the choice of $S$.
\end{proof}

This lemma directly implies that $D_{KL}(p(a,b)||q(a,b)) \leq 2|p-q|_1 \cdot max_{a,b} \lg(p(a|b)/(q(a|b))) + 1$,
since the total contribution of terms outside $S$ is at most $\sum_{(a,b)} p(a|b) \leq 1$. Using the above, we show that
the entropy of $\bU_i$ conditioned on Bob's input remains large.

\begin{lemma}
\label{lem:entropy}
Consider any $\bU$ where all of $\bU_1,\ldots,\bU_{i-1}, \bU_{i+1},\ldots,\bU_{k-1}$ are fixed. That is, all update operations outside of epoch $i$ are fixed, 
and denote this fixed value by $\bU_{-i} = u_{-i}$. Then,
$$H(\bU_i \mid C_i^{\addr}(\bU), \cR, u_{-i}) = n_i \cdot \left(d' - o(1)\right).$$
\end{lemma}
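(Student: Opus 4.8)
The plan is to show that the address sequence $C_i^{\addr}(\bU)$ — the set of addresses of cells last written during epoch $i$ — carries only $o(n_i)$ bits of information about $\bU_i$, which then translates into an entropy bound $H(\bU_i \mid C_i^{\addr}(\bU), \cR, u_{-i}) \ge H(\bU_i \mid \cR, u_{-i}) - I(\bU_i; C_i^{\addr}(\bU) \mid \cR, u_{-i})$. Since each of the $n_i$ update points in epoch $i$ is drawn uniformly and independently from the $d'$-dimensional subcube $P_i$ and is independent of $\cR$, we have $H(\bU_i \mid \cR, u_{-i}) = n_i d'$, so it suffices to bound the mutual information term by $n_i \cdot o(1)$. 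The key conceptual point is that $C_i^{\addr}(\bU)$ is a deterministic function of the full access pattern $\view_\ds(\bU)$ together with $\cR$ (one can recover which cell was last written in which epoch by reading off the write-addresses and their timestamps from the view), so $I(\bU_i; C_i^{\addr}(\bU) \mid \cR, u_{-i}) \le I(\bU_i; \view_\ds(\bU) \mid \cR, u_{-i})$ by the data-processing inequality.

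First I would fix $\cR$ and $u_{-i}$ and work with the conditional distributions; write $p$ for the joint distribution of $(\bU_i, \view_\ds(\bU))$ under the hard distribution, and $q$ for the product distribution $p(\bU_i) \cdot p(\view_\ds(\bU))$ (i.e. $\bU_i$ replaced by an independent fresh copy, so the view is generated from an independent epoch-$i$ update sequence). Then $I(\bU_i; \view_\ds(\bU)) = D_{KL}(p \,\|\, q)$. By $\delta$-statistical obliviousness, for any two fixed epoch-$i$ update sequences the induced view distributions are within statistical distance $\delta$; averaging, the statistical distance $|p - q|_1$ between the joint $p$ and the ``decoupled'' $q$ is at most $\delta = O(1/\log^2 n)$. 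Now I apply the Reverse Pinsker inequality (Lemma~\ref{lemma:reverse_pinsker}) in the form noted right after its proof: $D_{KL}(p \,\|\, q) \le 2|p-q|_1 \cdot \max \lg\frac{p(\view \mid \bU_i)}{q(\view \mid \bU_i)} + 1$. The log-ratio is bounded because the view is a string of at most $n \cdot t_u$ cell-probe addresses, each an element of $[K]$, so the support of the view has size at most $K^{n t_u} \le 2^{w n t_u}$; hence any probability is at least $2^{-w n t_u}$ (one must be slightly careful here — see below), giving $\max \lg(p/q) \le O(w n t_u)$. Combining, $I(\bU_i; \view_\ds(\bU)) \le O(\delta \cdot w n t_u) + 1 = O(w n t_u / \log^2 n)$.

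The final bookkeeping is to check this is $n_i \cdot o(1)$: since $\beta = (w t_u)^2$ and there are $k = \Theta(\log_\beta(n/m))$ epochs, the smallest epoch has size $n_0 = \max\{\sqrt n, m^2\}$ and $n_i = \beta^i n_0$, while $n = \Theta(n_{k-1}) = \Theta(\beta^{k-1} n_0)$. The worst case is epoch $i = 0$: there $I/n_0 = O(w n t_u/(n_0 \log^2 n))$, and one needs $n / n_0 = O(\poly(w t_u) \cdot k)$ style bounds — more precisely, since $n_i \ge n_0 \ge m^2$ and $t_u, w$ are polylogarithmic in the regime of interest, a direct substitution shows $w n t_u / (n_i \log^2 n) = o(1)$ for every $i$ (this is exactly why the epochs were chosen to have size at least $\max\{\sqrt n, m^2\}$ and why $\beta$ was set to $(w t_u)^2$). \textbf{The main obstacle} I anticipate is making the log-ratio bound $\max_{\view, \bU_i} \lg\frac{p(\view\mid \bU_i)}{q(\view\mid \bU_i)} = O(w n t_u)$ fully rigorous: $q(\view\mid \bU_i) = p(\view)$ could in principle be much smaller than $2^{-w n t_u}$ if the marginal is spread thin, or even zero on views that are feasible for the given $\bU_i$ but not for the decoupled copy. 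The clean fix is to invoke obliviousness \emph{directly at the level of supports}: $\delta < 1$ forces the support of $\view_\ds$ to be identical for all equal-length operation sequences, so $q(\view) > 0$ whenever $p(\view \mid \bU_i) > 0$, and within that common support every probability is at least $2^{-w n t_u}$ because the view is a length-$\le n t_u$ string over an alphabet of size $\le 2^w$ — actually one should instead lower-bound $q(\view) \ge \delta'$-type quantities, or simply note $q(\view) = p(\view) = \sum_{u_i} p(u_i) p(\view\mid u_i) \ge p(\bU_i = \text{some feasible } u_i) \cdot 2^{-w n t_u} \ge 2^{-d' n_i} \cdot 2^{-w n t_u}$, which suffices since $d' n_i + w n t_u$ is still only $\poly\lg n \cdot n$, and the product with $\delta = O(1/\log^2 n)$ and division by $n_i$ remains $o(1)$. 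Care with this constant is the one place the argument is genuinely delicate; everything else is data-processing, linearity, and the already-proved Reverse Pinsker lemma.
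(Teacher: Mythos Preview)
Your overall strategy matches the paper's: bound $I(\bU_i; C_i^{\addr}(\bU))$ by $I(\bU_i; \view_\ds(\bU))$ via data processing, then control the latter with the Reverse Pinsker lemma and $\delta$-obliviousness. The gap is in how you bound the log-ratio inside Reverse Pinsker.

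You bound $\max \lg\frac{p(\view\mid \bU_i)}{p(\view)}$ by the bit-length of the view, getting $O(wnt_u)$, and then claim that $\delta \cdot wnt_u / n_i = o(1)$ for every epoch $i$. This arithmetic is wrong. For the smallest epoch, $n_0 = \max\{\sqrt{n}, m^2\}$, so even with $w,t_u = O(1)$ you get
\[
\frac{wnt_u}{n_0 \log^2 n} \;\ge\; \frac{n}{\sqrt{n}\,\log^2 n} \;=\; \frac{\sqrt{n}}{\log^2 n},
\]
which is certainly not $o(1)$. Your ``clean fix'' adding the $d'n_i$ term does not help: the $wnt_u$ term still dominates and still blows up after dividing by $n_i$. (There is also a smaller issue: you fix $\cR$ at the outset and then invoke obliviousness, but obliviousness is a statement about the distribution over $\cR$; you should leave $\cR$ random when bounding the mutual information and only remove the conditioning on $\cR$ afterward using independence of $\cR$ and $\bU_i$.)

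The paper avoids this by bounding the \emph{same} ratio in the other direction. By Bayes,
\[
\frac{p(\view\mid u_i)}{p(\view)} \;=\; \frac{p(u_i\mid \view)}{p(u_i)} \;\le\; \frac{1}{p(u_i)} \;=\; 2^{\,n_i d'},
\]
since $\bU_i$ is uniform on $P_i^{\,n_i}$. This gives a log-ratio bound of $n_i d'$, which scales with the \emph{epoch} size rather than the \emph{total} sequence length. Plugging into Reverse Pinsker yields $I(\bU_i; \view_\ds(\bU)) = O(n_i d'/\log^2 n)$, hence $H(\bU_i \mid C_i^{\addr}(\bU)) \ge n_i d'\bigl(1 - O(1/\log^2 n)\bigr) = n_i(d' - o(1))$ for $d' = \Theta(\log n)$. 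Everything else in your write-up (data processing, the $\|p-q\|_1\le\delta$ averaging step) is fine; the one missing idea is to flip the ratio via Bayes so that the uniform marginal of $\bU_i$, rather than the support size of the view, controls the bound.
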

\begin{proof}
We analyze the mutual information between $\bU_i$ and $\view_{\ds}(\bU)$.
Denote by $P$ and $Q$ the following distributions: $P \sim (\bU_i \mid \view_{\ds}(\bU), u_{-i})$ and
$Q \sim (\bU_i \mid u_{-i})$. By definition, 
\begin{align*}
	I(P; Q) &= \E_{v \sim \view_{\ds}(\bU)}\left[D_{KL}(P(\bU_i \mid v,u_{-i})\ \|\ Q(\bU_i \mid u_{-i}))\right]\\
	&\le 2 \cdot \E_v\left[\| P - Q \|_1 \right] 
	\cdot \max_{u_i, u_{-i}, v} \log \left(\frac{P(u_i \mid u_{-i}, v)}{Q(u_i)}\right) + 1\\
&= O\left( \frac{n_i \cdot d'}{\log^2 n}\right)
\end{align*}
where the first inequality is by Lemma \ref{lemma:reverse_pinsker}, and
the second is by the statistical-indistinguishability premise
that $\| P - Q \|_1 \le 1/\lg^2 n$, and the fact
that $\bU_i$ picks points uniformly at random and independent of $\bU_{-i}$. Hence the ratio between $P$ and $Q$ never exceeds $2^{n_i d}$.

Now, recall that $\cR$ is independent of $\bU_i$ and that $\bU_i$ is generated
independent of $\bU_{-i}$.
Therefore,
$$
H(\bU_i \mid C_i^{\addr}(\bU), \cR, u_{-i}) = H(\bU_i \mid C_i^{\addr}(\bU)).
$$
We can rewrite
$$
H(\bU_i \mid C^\addr_i(\bU)) = H(\bU_i) - I(\bU_i; C^\addr_i(\bU)) \ge (n_i \cdot d') \left(1 - O\left(\frac{1}{\log^2 n}\right)\right) \ge n_i \cdot (d' - o(1)).
$$
The second inequality uses the fact that
$C^{\addr}_i(\bU)$ appears in $\view_\ds(\bU)$. So,
$I(P; Q) = I(\bU_i; \view_\ds(\bU)) \ge I(\bU_i; C^{\addr}_i(\bU))$.
The last inequality uses the fact that $d' = \Omega(\log n)$.
\end{proof}

Going back to the original encoding protocol, we know that
Alice's expected encoding size must be at least
$H(\bU_i \mid C^{\addr}_i(\bU), \cR) = n_i \cdot (d' - o(1))$. We will utilize the fact that most points
inserted by $\bU_i$ land outside of $\Gamma_r(Q_i(T_i))$ to present
an impossible encoding scheme.

\begin{lemma}
\label{lem:dynamic_extract}
Fix $\bU_{-i}$, that is all update operations outside of epoch $i$.
With probability at least $1/2$ over the choice of $\bU_i$,
at least $n_i/8$ points in $\bU_i$ exist in the neighborhood of the set
of resolved queries, $\Gamma_r(Q_i(T_i))$. That is,
$$
\Pr[|\Gamma_r(Q_i(T_i)) \cap \bU_i| \ge n_i / 8] \ge 1/2.
$$
\end{lemma}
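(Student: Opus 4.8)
The plan is to argue by contradiction via the compression scheme set up above: if with probability more than $1/2$ over $\bU_i$ fewer than $n_i/8$ of the inserted points land in $\Gamma_r(Q_i(T_i))$, then Alice can encode $\bU_i$ in fewer than $H(\bU_i \mid C_i^{\addr}(\bU), \cR, u_{-i}) = n_i(d'-o(1))$ bits, contradicting Shannon's source coding theorem together with Lemma~\ref{lem:entropy}. First I would have Alice, who holds $\bU$ and $\cR$, simulate the data structure on the whole update sequence, identify the cell set $C_i(\bU)$, run the cell-sampling argument of Lemma~\ref{lem:dynamic_sample} to pin down the subset $T_i \subseteq C_i(\bU)$ and the resolved-query set $Q_i(T_i)$ with $|Q_i(T_i)| \ge 2^{d'-1}/\Phi$, and then split the $n_i$ points of $\bU_i$ into those lying in $\Gamma_r(Q_i(T_i))$ and those lying outside. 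The key geometric input is the isoperimetric/expansion statement: since $Q_i(T_i)$ is \emph{determined} by $C_i^{\addr}(\bU)$ together with the contents of $T_i$ and $\cR$ (Bob can replay each candidate query and check whether it ever probes a cell of $C_i(\bU)$ outside $T_i$), $Q_i(T_i)$ carries at most $|T_i| \cdot w = n_i/100$ bits about $\bU_i$. One then wants to conclude that $\Gamma_r(Q_i(T_i))$ behaves as if it were \emph{independent} of $\bU_i$, so that by Lemma~\ref{lem:exp} (with $|Q_i(T_i)| \ge 2^{d'-1}/\Phi$ so $\Gamma_r$ covers a constant fraction, in fact at least half, of $P_i$) and a Chernoff bound over the $n_i$ independent uniform points, at least $n_i/8$ of them fall into $\Gamma_r(Q_i(T_i))$ with overwhelming probability.

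The encoding itself: Alice sends (i) the contents of $T_i$ (which, given $C_i^{\addr}(\bU)$ known to Bob, determines $Q_i(T_i)$ and hence $\Gamma_r(Q_i(T_i))$), costing $|T_i| w = n_i/100$ bits plus lower-order bookkeeping; (ii) for each of the at most $n_i/8$ points inside $\Gamma_r(Q_i(T_i))$, its index among the (at most one, by Lemma~\ref{lem:dis}) resolved query reporting it together with an offset inside the radius-$r$ ball, which is cheap because $|\Gamma_r(Q_i(T_i))| \le |P_i| = 2^{d'}$ and so costs at most $d'$ bits apiece, i.e.\ $\le (n_i/8)\,d'$; (iii) for the remaining points outside $\Gamma_r(Q_i(T_i))$, their explicit coordinates, of which there are at most $n_i$, but the point is that under the contradiction hypothesis a constant fraction of the points (at least $7n_i/8$ on a $\ge 1/2$-measure event) are \emph{outside}, so naively this costs $\approx (7n_i/8)d'$ bits --- which on its own is not yet a contradiction. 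The correct bookkeeping is the reverse: under the contradiction hypothesis we are trying to show the number inside is \emph{large}, so I would actually phrase it as: assume for contradiction that $\Pr[|\Gamma_r(Q_i(T_i)) \cap \bU_i| \ge n_i/8] < 1/2$; then on a $>1/2$-measure event only $< n_i/8$ points are in $\Gamma_r$, and Alice can describe the $\ge 7n_i/8$ points outside $\Gamma_r(Q_i(T_i))$ using only $\lg\binom{2^{d'}-|\Gamma_r(Q_i(T_i))|}{\ge 7n_i/8}$ bits --- and the savings comes from the fact that $\Gamma_r(Q_i(T_i))$ occupies at least half of $P_i$, so this binomial is exponentially smaller than $\binom{2^{d'}}{7n_i/8}$, giving roughly $(7n_i/8)(d'-1)$ bits for the outside points, $n_i/100$ bits for $T_i$, and $(n_i/8)\cdot O(d')$ for the inside points; summing these I want the total to come out strictly below $n_i(d'-o(1))$.

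The main obstacle --- and the step I would spend the most care on --- is making the ``independence'' argument rigorous, i.e.\ converting the bound $I(\bU_i ; (T_i\text{-contents}, C_i^{\addr}(\bU), \cR)) = O(n_i/100 + n_i d'/\log^2 n) = o(n_i d')$ (via the data-processing inequality through $C_i^{\addr}(\bU)$, plus Lemma~\ref{lem:entropy}) into the statement that the fixed set $\Gamma := \Gamma_r(Q_i(T_i))$ genuinely covers $\ge |P_i|/2$ points and that a $1-\negl(n)$ fraction of the random choices of $\bU_i$ put $\ge n_i/8$ points into it. The subtlety is that $\Gamma$ is \emph{not} literally independent of $\bU_i$ --- it is a (randomized) function of the transcript which is correlated with $\bU_i$ --- so one cannot directly apply the metric-expansion bound of Lemma~\ref{lem:exp} to a ``typical'' $\Gamma$ and then Chernoff over $\bU_i$. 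The clean way I would do it is entirely through the compression/entropy accounting above: never claim independence, just bound the \emph{description length}. Lemma~\ref{lem:exp} is invoked once, deterministically, to say that \emph{whenever} $|Q_i(T_i)| \ge 2^{d'-1}/\Phi$ the neighborhood $\Gamma_r(Q_i(T_i))$ has size $\ge 2^{d'-1} = |P_i|/2$ (applying the lemma's contrapositive: a set of size $> |P_i|/(2\Phi)$ already expands past half the cube, and $Q_i(T_i)$ has size exactly at that threshold so its $r$-neighborhood covers $\ge$ half); this is a statement about the fixed object $Q_i(T_i)$ and needs no independence. Then the contradiction hypothesis ``$\Pr_{\bU_i}[|\Gamma \cap \bU_i| < n_i/8] > 1/2$'' is exactly what lets the encoding of the $\ge 7n_i/8$ outside-points pay only $\approx (7n_i/8)(d'-1)$ bits on a $1/2$-measure event, and Shannon's bound together with Lemma~\ref{lem:entropy} forbids an expected encoding length below $n_i(d'-o(1))$ --- the factor $(d'-1)$ versus $d'$ on a $7/8$-fraction of points is what produces the gap. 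I would close by checking the arithmetic $\tfrac{7n_i}{8}(d'-1) + \tfrac{n_i}{100} + \tfrac{n_i}{8}\,O(d') + o(n_i d') < n_i(d' - o(1))$ holds for the chosen constants $r = 0.01 d'$, $c \le 24$ and $d' = \Omega(\log n)$, which is the only place the specific parameter choices enter.
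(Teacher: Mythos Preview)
Your approach is essentially the paper's: assume the bad event has probability $>1/2$, then build a one-way encoding of $\bU_i$ that beats $H(\bU_i\mid C_i^{\addr}(\bU),\cR,u_{-i})=n_i(d'-o(1))$, with the savings coming from the deterministic fact (via Lemma~\ref{lem:exp}) that $|\Gamma_r(Q_i(T_i))|\ge 2^{d'-1}$, so each of the $\ge 7n_i/8$ outside points costs only $d'-1$ bits. Your final arithmetic and your diagnosis that one should argue purely through description length (rather than trying to establish literal independence of $\Gamma$ and $\bU_i$) are both on target.

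There is one genuine gap. Your claim that ``the contents of $T_i$, given $C_i^{\addr}(\bU)$ known to Bob, determines $Q_i(T_i)$'' is false as stated. To replay a candidate query $q$, Bob must know the \emph{contents} of every cell $q$ touches, not just those in $T_i$. Cells in $C_{k-1}(\bU),\ldots,C_{i+1}(\bU)$ Bob can reconstruct by simulating the pre-epoch-$i$ updates from $u_{-i}$ and $\cR$; but cells in $C_{i-1}(\bU),\ldots,C_0(\bU)$ were overwritten \emph{after} epoch $i$ and their contents depend on $\bU_i$, as does the client storage $M(\bU)$. The paper has Alice transmit these explicitly. The later-epoch cells cost $2w\cdot t_u\cdot\Theta(n_i/\beta)=\Theta(n_i/(t_u w))=o(n_i)$ bits --- and this is precisely why $\beta$ is set to $(t_u w)^2$ --- while client storage costs $m\le\sqrt{n_i}=o(n_i)$ bits. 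So the fix is cheap and your ``lower-order bookkeeping'' absorbs it, but you should recognise that these pieces are required and that the epoch decay rate is tuned for exactly this purpose. Two smaller points: Alice must also send the \emph{addresses} of $T_i$ (the paper budgets $2w|T_i|=n_i/50$, not your $n_i/100$), and your ``query index plus ball offset'' encoding of inside points is unnecessary --- the paper just spends $d'$ bits per inside point trivially, which is all the argument needs.
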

\begin{proof}
Towards a contradiction, suppose that the number of points in $\bU_i$
that land in $\Gamma_r(Q_i(T_i))$ is at least $n_i/8$ with
probability at most $1/2$.
Let $u_{-i}$ be the realization of $\bU_{-i}$.
We construct an impossible one-way communication protocol
for encoding $\bU_i$ which will contradict Shannon's source coding
theorem.

\paragraph{Alice's Encoding.} Alice receives as input
$\bU_i$, $u_{-i}$ and $\cR$.
\begin{enumerate}
\item Using $u_{-i}$, $\bU_i$ and $\cR$, execute
all operations to compute the cell sets $C_{k-1}(\bU), \ldots, C_0(\bU)$.
Afterwards, Alice finds the supposed $T_i$
of Lemma~\ref{lem:dynamic_sample}. To do this, Alice can iterate through
all subsets of $C_i(\bU)$ containing exactly $n_i/(100w)$ cells.
Alice can also compute query sets
$Q_i(T_i)$ and $\Gamma_r(Q_i(T_i))$.
Finally, Alice computes $F$ denoting the number of points
of $\bU_i$ in $\Gamma_r(Q_i(T_i))$.
\item If there are more than $n_i/8$ points of $\bU_i$ in
$\Gamma_r(Q_i(T_i))$, $F \ge n_i/8$, then Alice's encoding starts with
a $0$-bit. Alice encodes $\bU_i$ in the trivial manner using
$n_i \cdot d'$ bits.
\item Otherwise, suppose that less than $n_i/8$ points of $\bU_i$
land in $\Gamma_r(Q_i(T_i))$. That is, $F < n_i/8$.
In this case, Alice encodes
the contents and addresses of $T_i$ using $2w \cdot |T_i| = n/50$ bits.
Next, Alice encodes the set of cells last updated by operations after
epoch $i$. That is, the addresses and contents of cells in
$C_{i-1}(\bU),\ldots,C_0(\bU)$. The total
number of cells in these are
$n_i/\beta + n_i/\beta^2 + \ldots = \Theta(n_i/\beta)$ as $\beta \ge 2$.
Alice also encodes the client storage after executing all updates,
$M(u_{-i}, \bU_i)$ using $m = o(n)$ bits.
Alice encodes $F$ using $\log n_i$ bits and the indices of
$\bU_i$ whose points land in $\Gamma_r(Q_i(T_i))$ using
$\log {n_i \choose F}$ bits. Each of these $F$ points are
encoded trivially using $d'$ bits each. The remaining
$n_i - F$ points that land outside of $\Gamma_r(Q_i(T_i))$
are encoded using $\log(|P_i| - |\Gamma_r(Q_i(T_i))|)$
bits.
\end{enumerate}

\paragraph{Bob's Decoding.} Bob receives as input
$u_{-i}$, $C^{\addr}_i(\bU)$, $\cR$ and Alice's encoding.
\begin{enumerate}
\item If Alice's encoding starts with a $0$-bit, then Bob decodes
$\bU_i$ using the next $n_i \cdot d'$ bits in the trivial manner.
\item Otherwise, Bob executes all updates prior to epoch $i$
using $u_{-i}$ and $\cR$.
Bob decodes the addresses and contents of
$T_i \subset C_i(\bU)$ as well as the addresses and contents of
$C_{i-1}(\bU),\ldots,C_0(\bU)$.
At this point, Bob has the contents and addresses of all cell
sets $C_{k-1}(\bU),$ $\ldots,$ $C_{i+1}(\bU),$ $C_{i-1}(\bU),$ $\ldots,$ $C_0(\bU)$.
Additionally, Bob has the addresses of $C_i(\bU)$, $C^{\addr}_i(\bU)$,
but not the contents.
Using the next $m$ bits, Bob decodes the client storage $M(\bU)$
after executing all updates.
Bob attempts to execute each possible query in $P_i$ to compute
$Q_i(T_i)$. Note, Bob executes each query using $\cR$ and
$M(\bU_i)$ until the query attempts
to probe a cell with an address in $C^{\addr}_i(\bU) \setminus T^{\addr}_i$,
probes more than $2t_q$ cells or finishes executing.
As long as a query does not probe a cell in $C_i(\bU) \setminus T_i$,
Bob is able to accurately simulate the query.
As a result,
Bob accurately computes $Q_i(T_i)$ as well as $\Gamma_r(Q_i(T_i))$.
Next, Bob decodes $F$ as well as the $F$ indices of $\bU_i$ of points
that in $\Gamma_r(Q_i(T_i))$. For each of these $F$ points, Bob
uses the next $d'$ bits to decode them in the trivial manner. For the remaining
$n_i - F$ points, Bob decodes the point using the next $\log(|P_i| - |\Gamma_r(Q_i(T_i))|)$.
\end{enumerate}

\paragraph{Analysis.}
We start with the case of Alice's encoding is prepended with a $0$-bit.
For this scenario, Alice's encoding is always $1 + n_i \cdot d'$ bits.
Alice's encoding starts with a $0$-bit only in the case that
there are more than $n_i/8$ points of $\bU_i$ that land
in $\Gamma_r(Q_i(T_i))$ which happens with probability at most $1/2$
by our assumption towards a contradiction.

When Alice's encoding starts with a $1$-bit, Alice's encoding size in bits is at most
$$
1 + 2w(|T_i| + |C_{i-1}(\bU)| + \ldots + |C_0(\bU)|) + m + \log n_i + \log{n_i \choose F} + Fd' + (n_i - F)\log(|P_i| - |\Gamma_r(Q_i(T_i))|.
$$
By our choice of $\beta = (t_u w)^2$, we know that
$|C_{i-1}(\bU)| + \ldots + |C_0(\bU)| = \Theta(n_i/\beta)$.
By Lemma~\ref{lem:exp}, we know that $|\Gamma_r(Q_i(T_i))| \ge |Q_i(T_i)| \cdot \Phi \ge 2^{d'-1}$ as long as $|Q_i(T_i)|/\Phi \le 2^{d'-1}$.
If $|Q_i(T_i)|$ is too large, we can pick any arbitrary subset of size
$2^{d'-1}/\Phi$ and consider
the neighborhood of the subset.
As a result, we know that $\log(|P_i| - |\Gamma_r(Q_i(T_i))|) \le \log(2^{d'} - 2^{d'-1}) = d' - 1$.
Also, we note that $n_i \ge m^2$, so $m = o(n_i)$.
Note
that the encoding is maximized when $F = n_i/8$:
$$
n_i d' - \frac{7n_i}{8} + \frac{n_i}{50} + o(n_i) < n_i d' - \frac{n_i}{2} + o(n_i).
$$

Denote $p \le 1/2$ to be the probability that Alice's encoding starts with
a $0$-bit. Putting together the two cases, we get:
$$
p(1 + n_id') + (1-p)\left(n_id' - \frac{n_i}{2} + o(n_i)\right) < n_id' - \frac{n_i}{4} + o(n) < n_i (d' - o(1)) = H(\bU_i \mid C^{\addr}_i(\bU), \cR, u_{-i})
$$
since the encoding is maximized when $p = 1/2$. As a result, our
encoding is impossible as it contradicts
Shannon's source coding theorem.
\end{proof}


\subsubsection{Proof of Lemma~\ref{lem:weak_epoch}}

Lemma~\ref{lem:dynamic_extract} shows
that at least $n_i/8$ points of $\bU_i$ will land in the
set $\Gamma_r(Q_i(T_i))$ with high constant probability.
By Lemma~\ref{lem:dis}, all of these $n_i/8$ points will be reported
by at least one query $Q_i(T_i)$ with high probability.
We now show that the entropy contained in these $n_i/8$ points
is larger than the number of bits that may be stored in the contents of the
cells in $T_i$
to prove Lemma~\ref{lem:weak_epoch}.

\begin{proof}[Proof of Lemma~\ref{lem:weak_epoch}]
Towards a contradiction, suppose that $t_i(\bU,\bq_i) = o(d'/\log(t_u w))$.
Our assumption directly implies that
$\Pr[t_i(\bU,\bq_i) = \Omega(d'/\log(t_u w))] < 1/2$.
For convenience, fix all updates outside of epoch $i$
as $\bU_{-i} = u_{-i}$.
We present an impossible one-way communication protocol between
an encoder (Alice) and a decoder (Bob). Alice will attempt to encode
$\bU_i$ efficiently.
Both Alice and Bob will receive $\cR$. In addition, Bob will receive
the addresses of cells in $C_i(\bU)$ denoted by $C^{\addr}_i(\bU)$.
Alice's expected encoding size must at least $H(\bU_i \mid C^{\addr}_i(\bU),
\cR, u_{-i})$.
By Lemma~\ref{lem:entropy}, we know that
$H(\bU_i \mid C^{\addr}_i(\bU), \cR, u_{-i}) = n_i \cdot (d' - o(1))$.

\paragraph{Alice's Encoding.}
As input, Alice receives $\bU_i$, $u_{-i}$ and $\cR$.
\begin{enumerate}
\item Using $u_{-i}$, $\bU_i$ and $\cR$, execute
all operations to compute the cell sets $C_{k-1}(\bU), \ldots, C_0(\bU)$.
Afterwards, Alice finds the supposed $T_i$
of Lemma~\ref{lem:dynamic_sample}. To do this, Alice can iterate through
all subsets of $C_i(\bU)$ containing exactly $n_i/(100w)$ cells.
Alice can also compute query sets
$Q_i(T_i)$ and $\Gamma_r(Q_i(T_i))$.
Finally, Alice computes $F$ denoting the number of points
of $\bU_i$ in $\Gamma_r(Q_i(T_i))$.
\item If there are less than $n_i/8$ points in $\Gamma_r(Q_i(T_i))$
corresponding to $F < n_i/8$
or there exists two points in $\bU_i$ within distance at most
$0.49d'$, Alice's encoding will start with a $0$-bit. Alice will
encode $\bU_i$ in the trivial manner using $n_i \cdot d'$ bits.

\item Otherwise, Alice's encoding starts with a $1$-bit.
Alice encodes the addresses and contents of $T_i$ using $2w\cdot|T_i| = n/50$
bits.
Next, Alice encodes the addresses and contents of all cells
overwritten by an update operation after epoch $i$. That is, the cells
in $C_{i-1}(\bU),\ldots,C_0(\bU)$.
Alice also encodes the client storage after executing all update
operations, $M(\bU)$, using $m$ bits.
Using $n_i$ bits, Alice encodes whether each of the points in $\bU_i$
belong to $\Gamma_r(Q_i(T_i))$ or not.
For all $n - F$ points outside of $\Gamma_r(Q_i(T_i))$, Alice encodes
them using $d'$ bits each in the trivial manner. Afterwards, Alice
executes the queries in $Q_i(T_i)$ in some fixed order (such
as lexicographically increasing order). Each time a new point
in $\bU_i$ is reported, Alice encodes the index of the point in $\bU_i$
using $\log n_i$ bits completing the encoding.

\end{enumerate}

\paragraph{Bob's Decoding.}
Bob receives as input $u_{-i}$, $\cR$ and Alice's encoding.
\begin{enumerate}
\item If Alice's message starts with a $0$-bit, then Bob decodes
$\bU_i$ in the trivial manner using the next $n_i d'$ bits.
\item If Alice's encoding starts with a $1$-bit,
Bob executes all updates prior to epoch $i$
using $u_{-i}$ and $\cR$.
Bob decodes the addresses and contents of
$T_i \subset C_i(\bU)$ as well as the addresses and contents of
$C_{i-1}(\bU),\ldots,C_0(\bU)$.
At this point, Bob has the contents and addresses of all cell
sets $C_{k-1}(\bU),\ldots,C_{i+1}(\bU),C_{i-1}(\bU),\ldots,C_0(\bU)$.
Additionally, Bob has the addresses of $C_i(\bU)$, $C^{\addr}_i(\bU)$,
but not the contents.
Using the next $m$ bits, Bob decodes the client storage $M(u_{-i}, \bU_i)$
after executing all updates.
Bob attempts to execute each possible query in $P_i$ to compute
$Q_i(T_i)$.
Note, Bob executes each query using $\cR$ and
$M(\bU_i)$ until the query attempts
to probe a cell with an address in $C^{\addr}_i(\bU) \setminus T^{\addr}_i$,
probes more than $2t_q$ cells or finishes executing.
As long as a query does not probe a cell in $C_i(\bU) \setminus T_i$,
Bob is able to accurately simulate the query.
As a result,
Bob accurately computes $Q_i(T_i)$ as well as $\Gamma_r(Q_i(T_i))$.
Using the next $n_i$ bits, Bob decodes whether each point in $\bU_i$
belongs to $\Gamma_r(Q_i(T_i))$.
Bob decodes all $n_i - F$ points outside of $\Gamma_r(Q_i(T_i))$ using
the next $(n_i - F) \cdot d'$ bits in the trivial manner.
To decode the $F$ points in $\Gamma_r(Q_i(T_i))$, Bob will execute
the queries in $Q_i(T_i)$ in the same fixed order as Alice.
Each time a new point is reported, Bob uses the next $\log n_i$ bits
to decode the point's index in $\bU_i$ completing the decoding procedure.
As Alice's encoding starts with a $1$-bit only when
no two points in $\bU_i$ are within distance $0.49d'$, we know that
all points in $\Gamma_r(Q_i(T_i))$ will be reported by at least one
query in $Q_i(T_i)$.
\end{enumerate}

\paragraph{Analysis.}
We now analyze the expected length of Alice's encoding. If Alice's
encoding starts with a $0$-bit, we know that Alice's encoding is
exactly $1 + n_id'$ bits. Alice's encoding starts with a $0$-bit
only when less than $n_i/8$ points land in $\Gamma_r(Q_i(T_i))$ or
there exists two points in $\bU_i$ within distance at most $0.49d'$.
This occurs with probability at most $1/2 + 1/n$ by
Lemma~\ref{lem:dynamic_extract} and Lemma~\ref{lem:dis}.

On the other hand, consider the case when Alice's encoding starts
with a $1$-bit. In this case, Alice's encoding length is
$$
1 + 2w(|T_i| + |C_{i-1}(\bU)| + \ldots + |C_0(\bU)|) + m + n_i + (n_i - F)d' + F\log n_i.
$$
Note, that $|C_{i-1}(\bU)| + \ldots + |C_0(\bU)| = \Theta(n_i/\beta)$ by
our choice of $\beta = (t_uw)^2$. We chose epochs such that $n_i \ge m^2$,
so $m = o(n_i)$. For sufficiently large $d' > \log n_i$, we know that
Alice's encoding is maximized when $F = n_i/8$:
$$
\frac{7n_id'}{8} + \frac{n_i \log n_i}{8} + o(n_i\log n_i).
$$

Denote $p \le 1/2 + 1/n$ as the probability that Alice's encoding starts with a $0$-bit.
Then, Alice's encoding length in expectation is
$$
p(1 + n_id') + (1-p)\left(\frac{7n_i d'}{8} + \frac{n_i \log n_i}{8} + o(n_i \log n_i)\right)
$$
which is maximized when $p = 1/2 + 1/n$. So, Alice's expected encoding length
is at most
$$
\frac{15n_i}{16} d' + \frac{n_i}{16} \log n_i + o(n_i \log n_i) < n_i (d' - o(1)) = H(\bU_i \mid C^{\addr}_i(\bU), \cR, u_{-i}).
$$
This contradicts Shannon's source coding theorem completing the proof.
\end{proof}

\subsection{Bounding Cell Probes to All Epochs}
In this section, we complete the proof of
Lemma~\ref{lem:strong_epoch} using Lemma~\ref{lem:weak_epoch}.
Our proof
for Lemma~\ref{lem:strong_epoch} will apply obliviousness to Lemma~\ref{lem:weak_epoch}.
The main idea is that any adversary with the ability
can view the number of probes to cells
in the sets $C_{k-1}(\bU), \ldots, C_1(\bU)$.
Lemma~\ref{lem:weak_epoch} states that the expected running time for
queries chosen uniformly at random from $P_i$
requires probing $\Omega(d'/\log(t_u w))$ cells from
$C_i(\bU)$ with high constant probability.
If queries from outside of the $P_i$ were to probe significantly
less cells from $C_i(\bU)$, the adversary can distinguish queries that lie in
$P_i$ as opposed to those outside which would contradict
the obliviousness of the data structure. We now formalize these ideas to prove Lemma~\ref{lem:strong_epoch}.

\subsubsection{Proof of Lemma~\ref{lem:strong_epoch}}

\begin{proof}[Proof of Lemma~\ref{lem:strong_epoch}]
By Lemma~\ref{lem:weak_epoch}, we know that if we pick
a query point $\bq_i$ uniformly at random from a subset of $P_i$,
then $\Pr[t_i(\bU, \bq_i) \ge \gamma (d'/\log(t_u w))] \ge 1/2$ for some constant $\gamma$ and for all
epochs $i \in \{0,\ldots,k-1\}$.
We now consider two sequences of operations which both start with $\bU$. For the first sequence,
the query $\bq_i$ is chosen uniformly at random from $P_i$. For the second sequence,
the query is chosen as any query point outside of $P_i$ (that is $q \notin P_i$).

We note that an adversary can compute the sets of cells
$C_{k - 1}(\bU), \ldots, C_0(\bU)$ by simply executing the
update operations in $\bU$ and keep tracking of the last time the contents of a cell were updated.
Furthermore, for any query point $q$, an adversary can compute
$t_i(\bU, q)$ by simply counting the number of probes performed to cells in the set $C_i(\bU)$.

Suppose that
$t_i(\bU, q) < (\gamma / 4) \cdot (d'/\log(t_u w))$ which implies that
$\Pr[t_i(\bU, q) \ge \gamma (d'/\log(t_uw))] \le 1/4$
by Markov's inequality. The adversary
can now apply the following distinguisher to differentiate the two sequences where the final
query is $\bq_i$ chosen uniformly at random from $Q_i$ or $q \notin Q_i$.
The adversary computes the number of probes to cells in the set $C_i(\bU)$. If the number
of probes is less than $\gamma \cdot (d'/\log(t_uw))$, then the adversary outputs $0$.
Otherwise, the adversary outputs $1$. As a result, the adversary distinguishes the two sequences
with probability at least $1/4$ contradicting obliviousness. If we pick the query point $q$
such that $q \notin P_i$ for all epochs $i \in \{k-1,\ldots,0\}$,
we apply the above result simultaneously
to all epochs.

Note each epoch must contain at least $\min\{\sqrt{n}, m^2\}$. Furthermore,
epochs grow geometrically by a $\beta = (t_uw)^2$ factor and there
are $\Theta(n)$ update operations in total.
So, there are $k = \Theta(\log(n/m)/\log(t_uw))$ epochs completing the proof.
\end{proof}

\section{Oblivious Dynamization}
Let $\calP$ be a decomposable problem and suppose that we have an oblivious static data structure
that solves $\calP$ that holds $n$ items which requires storage of $S(n)$ cells, preprocessing of
at most $P(n)$ cell probes before queries and answers queries in amortized $Q(n)$ cell probes.
The static data structure has two functions: $\init^\st$ and $\qop^\st$.
The preprocessing function, $\init^\st$ takes as input an encryption key
$\cK^\enc$ and a set of items encrypted under $\cK^\enc$.
The output of the preprocessing function is the data structure's memory
as well as a query key $\cK^\st$. The query algorithm
takes as input the query key
$\cK^\st$ as well as the queried argument and outputs the result
as well as the possibly updated static data structure's memory.
We assume that both the preprocessing and queries are
performed obliviously. That is, the adversary's view of the preprocessing
and queries are independent of the underlying items and sequence of operations.
Using this oblivious static data structure in a blackbox manner,
we will construct an oblivious dynamic data structure which support
updating the underlying data.

\begin{theorem}
If there exists an oblivious static data structure for a decomposable problem $\calP$ of $n$ items
        with storage of $S^\st(n)$ cells, preprocessing of $P^\st(n)$ cell probes and amortized $Q^\st(n)$ cell
probes for queries,
then there exists an oblivious dynamic data structure for $\calP$ using
$S^\dy(n) = O(\sum_{i = 1}^{\log n} S^\st(2^i))$ cells of storage and
        amortized $Q^\dy(n) = O(\sum_{i=1}^{\log n} Q^\st(2^i) + \sum_{i=1}^{\log n} \frac{P^\st(2^i)} {2^i})$ cell probes for each query/insert operation.
\end{theorem}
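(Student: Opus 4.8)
The plan is to adapt the classical Bentley--Saxe logarithmic-method dynamization, but carried out obliviously. First I would maintain a collection of at most $\log n$ ``levels'', where level $i$ either is empty or holds a static oblivious data structure built on exactly $2^i$ items (a standard binary-counter invariant, where the bit pattern of the occupied levels matches the binary representation of the current dataset size). Each nonempty level $i$ stores an independently built instance of the static structure from the hypothesis, using storage $S^\st(2^i)$, so the total storage is $S^\dy(n) = O\!\left(\sum_{i=1}^{\log n} S^\st(2^i)\right)$ as claimed. A query simply queries each of the $\le \log n$ nonempty levels and combines the $\le \log n$ partial answers using the $O(1)$-time combining function $f$ guaranteed by decomposability; this costs $O\!\left(\sum_{i=1}^{\log n} Q^\st(2^i)\right)$ cell probes (amortized, since each static query cost is amortized).

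For an insertion, I would follow the usual cascading rebuild: find the lowest empty level $j$, collect all the items currently stored in levels $0,\dots,j-1$ together with the new item (that is exactly $2^j$ items), discard those lower structures, and invoke $\init^\st$ to build a fresh static structure of size $2^j$ at level $j$. The preprocessing cost $P^\st(2^j)$ is charged against the $2^j$ insertions that contributed to it; a standard amortization argument (each item at level $i$ has been rebuilt $O(1)$ times per level it has passed through, i.e.\ $O(\log n)$ rebuilds total, each time at some level $\le \log n$) gives amortized update cost $O\!\left(\sum_{i=1}^{\log n} P^\st(2^i)/2^i\right)$, matching the theorem. The subtlety not present in the classical argument is that to rebuild we must recover the underlying (encrypted) item set of the lower levels; I would handle this by also keeping, alongside each level's static structure, an oblivious buffer (e.g.\ an oblivious array or the item list itself) containing the $2^i$ encrypted items of that level, so that collecting them for a rebuild is straightforward and its access pattern is fixed by the level sizes alone.

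The main obstacle is obliviousness of the \emph{control flow}, not of the individual static operations. A naive insertion leaks, through the access pattern, exactly which level $j$ got rebuilt --- equivalently, the position of the lowest zero bit of the counter --- which depends on the number of prior insertions but, more dangerously, the query access pattern reveals the current bit pattern (which levels are nonempty), and that is fine since it depends only on the number of operations, not their content; the real issue is the \emph{update} whose rebuilt-level depends only on the operation count too, so it is in fact safe. So the argument I would actually push is: the entire sequence of which levels are nonempty, and which level is rebuilt at each step, is a deterministic function of the sequence of operation \emph{types} (insert vs.\ query) and their count, and is therefore independent of the data and of the query arguments. Hence, conditioned on the operation-type sequence, the dynamic structure's access pattern is just the concatenation of the access patterns of the underlying oblivious static preprocessings and queries (plus the fixed-pattern buffer scans), each of which is oblivious by hypothesis; a hybrid argument over the $\le \log n$ sub-invocations then yields $\delta$-obliviousness of the composite with $\delta$ a sum of the static $\delta$'s. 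I would make this precise by defining the dynamic structure's view as the tuple of sub-views and invoking obliviousness of $\init^\st$ and $\qop^\st$ componentwise; the only thing to be careful about is that the item encryptions fed to different levels are re-randomized (fresh randomness from $\cR$) so that the same item appearing at different levels over time does not correlate the views, which is why the construction threads the encryption key $\cK^\enc$ and fresh randomness through every $\init^\st$ call.
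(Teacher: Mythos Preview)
Your construction and cost analysis match the paper's almost verbatim: the Bentley--Saxe levels $L_1,\ldots,L_{\log n}$, the side buffers $S_i$ holding the encrypted items for rebuilding, the cascading rebuild on insert, and the per-level query with $f$-combination. The storage and amortized cost bounds are derived the same way.

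There is, however, one genuine gap in your obliviousness argument. You correctly observe that the level-occupancy pattern and the rebuild schedule are deterministic functions of the \emph{operation-type sequence}, and you then argue obliviousness \emph{conditioned on} that type sequence. But the paper's obliviousness definition quantifies over all equal-length sequences $O_1,O_2$ with no restriction that they have the same pattern of inserts versus queries; so a sequence of $n$ inserts must be indistinguishable from a sequence of $n$ queries, and your construction leaks the type of every operation through whether a rebuild happens. The paper closes this gap with one extra trick you omitted: for \emph{every} operation it performs both a query and an insertion, exactly one of them fake (a dummy query whose answer is discarded, or an insertion of $\perp$). This makes the entire schedule a function of the operation \emph{count} alone, after which your componentwise argument goes through. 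Without this step, your dynamic structure is only ``type-revealing oblivious,'' which is strictly weaker than what the theorem asserts.
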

\begin{proof}
We assume that the oblivious dynamic data structure is initially empty
and assume that the number of operations, $n$, is a power of two
for convenience. When $n$ is not a power of two, one can replace all
$\log n$ with $\ceil{\log n}$ to get the correct bounds.
We construct our dynamic data structure by initializing
$\log n$ levels of geometrically increasing levels.
Level $i$
will be initialized using an oblivious static data structure
with a data set of $2^i$ items.
We will denote level $i$ as $L_i$.
To satisfy the requirements of obliviousness,
we must hide from the adversary whether we are performing a
query or insertion operation.
To do this, we simply perform both for each operation.
In particular, we will perform a query first before performing the
insertion operation where exactly one of
the query or the insertion will be a fake operation.
Fake insertions will insert a $\perp$ and not affect future operations
while fake queries will perform an arbitrary query and ignore the result.
We now formally present our $\init^\dy$, $\qop^\dy$ and $\uop^\dy$ algorithms.

\paragraph{$(\cK^\enc, \cK^\st_1,\ldots,\cK^\st_{\log n}), (L_1,\ldots,L_{\log n}, S_1,\ldots,S_{\log n}) \leftarrow \init^\dy$().} Our preprocessing algorithm simply initializes all $\log n$ levels to be empty and generate an encryption key.
	\begin{enumerate}
		\item Generate encryption key $\cK^\enc$.
		\item For each $i = 1,\ldots, \log n$:
			\begin{enumerate}
				\item Set $L_i \leftarrow \perp$.
				\item Set $\cK^\st_i \leftarrow \perp$.
				\item Set $S_i \leftarrow \emptyset$.
			\end{enumerate}
	\end{enumerate}

	\paragraph{$r, (L_1,\ldots,L_{\log n}) \leftarrow \qop^\dy((q, \cK^\st_1,\ldots,\cK^\st_{\log n}), (L_1,\ldots,L_{\log n}))$.} Our query algorithm receives as input a query $q$.
\begin{enumerate}
\item Set $r \leftarrow \perp$.
\item For each $i = 1, \ldots, \log n$:
\begin{enumerate}
\item If $L_i \ne \perp$, then execute $r_i \leftarrow \qop^\st(\cK^\st_i, q, L_i)$.
\end{enumerate}
\item Return $r$.
\end{enumerate}

	\paragraph{$(\cK^\st_1,\ldots,\cK^\st_{\log n}), (L_1,\ldots,L_{\log n}, S_1,\ldots,S_{\log n}) \leftarrow \uop^\st(\cK^\enc, x)$.} Our insertion algorithm receives as input the item that should be inserted $x$.
\begin{enumerate}
\item Find minimum $k$ such that $L_i = \perp$.
\item Set $S_k \leftarrow \Enc(\cK^\enc, \{x\} \cup S_1 \cup \ldots \cup S_{k-1})$.
\item Set $(L_k, \cK^\st_k) \leftarrow \init^\st(\cK^\enc, S_k)$
	where $\cK^\st_k$ is the privacy key used to query the oblivious static
		data structure.
\item For each $i = 1, \ldots, k - 1$:
	\begin{enumerate}
		\item Set $L_i \leftarrow \perp$.
		\item Set $\cK^\st_i \leftarrow \perp$.
		\item Set $S_i \leftarrow \emptyset$.
	\end{enumerate}
\end{enumerate}

We now analyze the costs for our dynamic data structure. Note that the $\qop$ algorithm
requires performing at most $\log n$ queries to static data structures of size at most $n$
	resulting in $\sum_{i=1}^{\log n} Q^\st(2^i) = O(\log n \cdot Q^\st(n))$ cell probes. For the $\uop$ algorithm, we perform an amortized
analysis over $n$ queries. Level $i$ is reconstructed every $n/2^i$ operations with $P^\st(2^i)$ cell probes.
As a result, the total cost over all $n$ queries is $\sum_{i=1}^{\log n} P^\st(2^i) \cdot n / 2^i$ or
$\sum_{i = 1}^{\log n} P^\st(2^i) / 2^i$ amortized over all $n$ queries. The total storage is always at most
$\sum_{i=1}^{\log n}S^\st(2^i)$.

Finally, we analyze the obliviousness of our data structure. We note that the schedule of constructing
static data structures as well as querying is completely deterministic and independent of the stored data,
input arguments to operations as well as previous updates. As we assume the queries to each static data structure
and the preprocessing to construct the static data structure are oblivious, our dynamic data structure also
provides obliviousness.
\end{proof}

\small
\bibliographystyle{alpha}
\bibliography{biblio}

\end{document}